\title{Logistic-Beta Processes for Dependent Random Probabilities with Beta
Marginals}
\date{}
\author[1]{Changwoo J. Lee \thanks{changwoo.lee@duke.edu}}
\author[2]{Alessandro Zito}
\author[3]{Huiyan Sang}
\author[1]{David B. Dunson} 
\affil[1]{Department of Statistical Science, Duke University}
\affil[2]{Department of Biostatistics, Harvard University}
\affil[3]{Department of Statistics, Texas A\&M University}
\theoremstyle{plain}
\newtheorem{theorem}{Theorem}[section]
\newtheorem{proposition}{Proposition}[section]
\newtheorem{corollary}{Corollary}[section]
\theoremstyle{definition}
\newtheorem{definition}[theorem]{Definition}
\theoremstyle{remark}
\def\T{{ \mathrm{\scriptscriptstyle T} }}
\def\lb{\normalfont{\textsc{lb}}} 
\def\lbp{\normalfont{\textsc{lbp}}}
\def\pipo{\normalfont{\pi_\textsc{po}}}
\def\pipg{\normalfont{\pi_\textsc{pg}}}
\def\pilb{\normalfont{\pi_\textsc{lb}}}
\def\pilb{\normalfont{\pi_\textsc{lb}}}
\begin{document}

\maketitle

\begin{abstract}
The beta distribution serves as a canonical tool for modeling probabilities in statistics and machine learning. However, there is limited work on flexible and computationally convenient stochastic process extensions for modeling dependent random probabilities.
We propose a novel stochastic process called the logistic-beta process, whose logistic transformation yields a stochastic process with common beta marginals.
Logistic-beta processes can model dependence on both discrete and continuous domains, such as space or time, and have a flexible dependence structure through correlation kernels. 
Moreover, its normal variance-mean mixture representation leads to effective posterior inference algorithms. 
We show how the proposed logistic-beta process can be used to design computationally tractable dependent Bayesian nonparametric models, including dependent Dirichlet processes and extensions.
We illustrate the benefits through nonparametric binary regression and conditional density estimation examples, both in simulation studies and in a pregnancy outcome application. 
\end{abstract}

\noindent%
{\it Keywords:}
Bayesian nonparametrics; Dependent Dirichlet process; Multivariate beta distribution; Nonparametric binary regression; P\'olya distribution.
\vfill
\section{Introduction}
\label{sec:1}
The beta distribution is widely used for modeling random probabilities owing to its flexibility, natural interpretation of parameters, and conjugacy in Bayesian inference. 
Beta distributions play a pivotal role in Bayesian nonparametric modeling \citep{Hjort2010-mg,Muller2015-rt,Ghosal2017-fk}. 
For instance, a broad class of random probability measures have stick-breaking representations in terms of beta random variables, including the popular Dirichlet processes and Pitman-Yor processes \citep{Ferguson1973-av, Sethuraman1994-ab,Perman1992-yh, Pitman1997-ym,Ishwaran2001-rd}. 
Other notable models that build upon the beta distribution include P\'olya trees \citep{Ferguson1974-tl} for nonparametric density estimation and beta processes \citep{Hjort1990-zd}, which were originally designed as priors for the cumulative hazard rate 
but have been used broadly in machine learning due to their link with latent feature models, such as the
Indian buffet process \citep{Griffiths2011-le}.

Stochastic processes with beta marginals play a pivotal role in extending such models to handle non-exchangeable and heterogeneous data sources. In the Bayesian nonparametric literature, a common solution lies in defining probability measure-valued stochastic processes whose realizations depend on covariates while maintaining marginal properties; examples include 
dependent Dirichlet processes  
\citep[DDP;][]{MacEachern1999-ui,Quintana2022-mc}, dependent P\'olya trees \citep{Trippa2011-sz} and dependent Indian buffet processes \citep{Perrone2017-ql}. 
Maintaining marginal properties is crucial since it not only simplifies how covariates influence the induced prior process but also clarifies the role of hyperparameters; see also \citet{MacEachern2000-lw}'s four desiderata for dependent Bayesian nonparametric models.

This article is motivated by the need to define flexible and computationally convenient classes of beta processes to substantially improve current approaches, such as copula-based 
formulations
\citep{MacEachern2000-lw,Arbel2016-or,De_Iorio2023-wc}. Copulas are highly flexible but lead to
challenging posterior computation, particularly in complex settings. 
\citet{Trippa2011-sz} and \citet{Bassetti2014-fa} instead 
extend bivariate beta distributions \citep{Olkin2003-du,Nadarajah2005-mx}, 
but this can only induce positive correlation. Other strategies rely on
latent binomial random variables \citep{Pitt2002-ue,Pitt2005-ww, Nieto-Barajas2012-fr} and the transformation of Gaussian autoregressive processes \citep{DeYoreo2018-zs}, but focus narrowly on discrete-time dependence.
Finally, constructions based on covariate-dependent ordering \citep{Griffin2006-kp}, local sharing of random components \citep{Chung2011-xq}, and Wright-Fisher diffusion \citep{Perrone2017-ql} are complex and
limited to specialized cases.

We propose a novel logistic-beta process, whose logistic transformation is a stochastic process with common beta marginals. 
The logistic-beta process builds on the normal variance-mean mixture representation of a univariate logistic-beta distribution \citep{Barndorff-Nielsen1982-xz}, and extends its hierarchical formulation to a stochastic process through correlation kernels in a manner related to a Gaussian process. 
As such, the proposed process can be flexibly defined in either discrete or continuous domains, and the implied dependence ranges from a perfect correlation to possibly negative correlations. Adopting the logistic-beta process prior for logit-transformed latent random probabilities provides a conditionally conjugate posterior sampling scheme that can directly exploit P\'olya-Gamma data augmentation strategies \citep{Polson2013-gb}. 

We provide two modeling examples utilizing the logistic-beta process: 
nonparametric regression for dependent binary data and conditional density estimation via a dependent Dirichlet process mixture with covariate-dependent weights and atoms.
There are few methods for covariate-dependent weights in the dependent Dirichlet process due to computational challenges \citep{Quintana2022-mc,Wade2025-ef}.
We illustrate how the proposed formulation leads to efficient posterior inference algorithms.
Although there are alternative Bayesian nonparametric models for conditional densities
\citep{Dunson2008-ds, Ren2011-el,Rigon2021-ir}, maintaining a marginal Dirichlet process structure has advantages in terms of understanding the roles of parameters involved in the model, ensuring that marginal prior properties are not affected by covariates in the absence of such information, and alleviation of the hyperparameter sensitivity problem highlighted by \citet{Wade2025-ef}.

\section{Logistic-beta processes}
\label{sec:2}
\subsection{Univariate logistic-beta and P\'olya mixing distributions}
We begin by reviewing the univariate logistic-beta distribution, which is also referred to as the type IV generalized logistic distribution \citep{Johnson1995-sk} or Fisher's z distribution \citep{Barndorff-Nielsen1982-xz} in the literature up to location-scale transformations. We say $\eta$ follows a univariate logistic-beta distribution with shape parameters $a,b>0$, if it has density 
\begin{equation}\label{eq:LB1}
    \pilb(\eta; a,b) = \frac{1}{B(a, b)}\Big(\frac{1}{1+e^{-\eta}}\Big)^{a} \Big(\frac{e^{-\eta}}{1+e^{-\eta}}\Big)^b, \quad \eta\in\mathds{R},
\end{equation}
where $B(a,b)$ is a beta function. 
When $a=b=1$, equation~\eqref{eq:LB1} reduces to the standard logistic distribution. 
Applying a logistic transformation $\sigma(x) = 1/(1+e^{-x})$ to a logistic-beta random variable yields a beta distribution $\sigma(\eta)\sim \mathrm{Beta}(a,b)$, hence the name logistic-beta.
We refer to \citet[][\S 23.10]{Johnson1995-sk} for further distributional details.

An important property of the logistic-beta distribution is its representation as a normal variance-mean mixture, as first shown by \citet{Barndorff-Nielsen1982-xz}. In particular, the density in equation~\eqref{eq:LB1} can be equivalently written as 
\begin{equation}
    \pilb(\eta; a,b) =\int_0^\infty N_1\left\{\eta; 0.5 \lambda(a-b), \lambda\right\}\pipo(\lambda ; a,b) \mathrm{d}\lambda,
   \label{eq:lbnormalvarmeanmixture}
\end{equation}
where $N_d(\bm\mu, \bm\Sigma)$ and $N_d(\bm\eta; \bm\mu, \bm\Sigma)$ denote the distribution and density of a $d$-dimensional multivariate normal with mean $\bm\mu$ and covariance $\bm\Sigma$, and $\pi_{\textsc{po}}(\lambda; a,b)$ denotes the density of a P\'olya distribution with shape parameters $a$ and $b$. This mixture representation is occasionally employed in regularized logistic regression problems \citep{Gramacy2012-dz,Polson2013-qa}. Figure~\ref{fig:lb_polya} shows density plots of the logistic-beta and corresponding P\'olya mixing distributions. 

\begin{figure}[t]
\includegraphics[width=\textwidth]{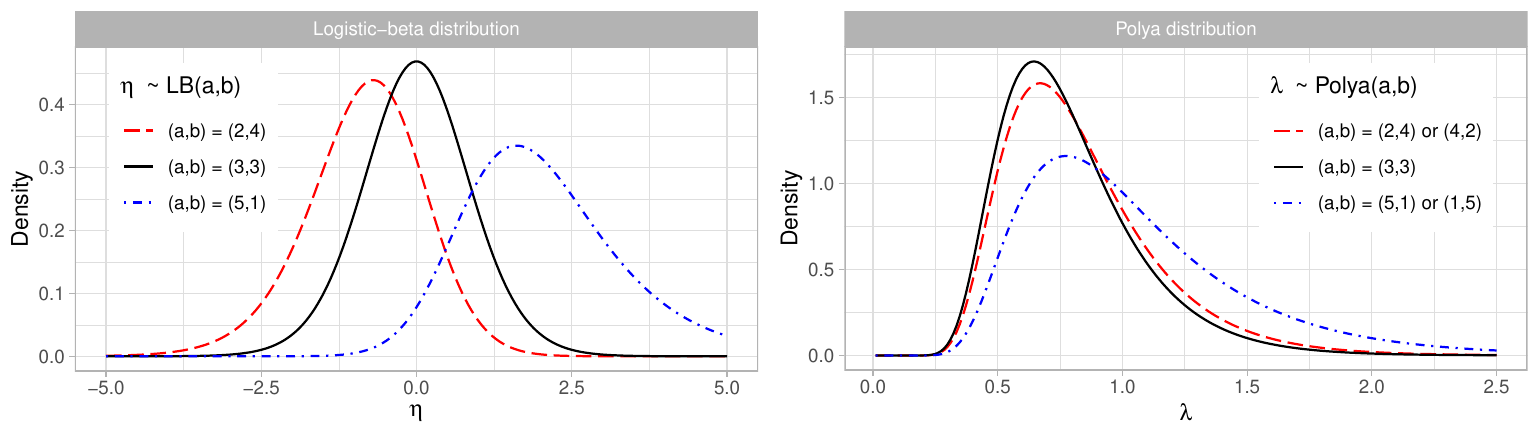}
    \caption{Density plot of univariate logistic-beta and associated P\'olya mixing distributions.}
    \label{fig:lb_polya}
\end{figure}

The P\'olya mixing distribution, denoted as $\lambda\sim \mathrm{Polya}(a,b)$, is defined as an infinite sum of exponential distributions $\sum_{k=0}^\infty 2\epsilon_k/\{(k+a)(k+b)\}$, with $\epsilon_k$ following standard exponential distributions independently for $k=0,1,\dots$. 
A truncated version of this expression can be used for approximate random variable generation.
The density function of a P\'olya can be written as an alternating series \citep{Barndorff-Nielsen1982-xz} \begin{equation}
    \pipo(\lambda ; a, b) = \sum_{k=0}^\infty (-1)^k \binom{a+b + k-1}{k}\frac{k+(a+b)/2}{B(a,b)}\exp\left\{-\frac{(k+a)(k+b)}{2}\lambda\right\}.
    \label{eq:polyadensity}
\end{equation} 
In practice, this representation brings numerical issues in evaluating the density, especially near the origin where the exponential term in equation~\eqref{eq:polyadensity} has a slow decay. However,  we derive a simple identity between P\'olya densities among parameter pairs $(a,b)$ that have fixed sum, which we use to develop posterior inference algorithms that avoid direct evaluation of equation~\eqref{eq:polyadensity}.

\begin{proposition}
\label{prop:polya_dens_identity}
The P\'olya density in equation~\eqref{eq:polyadensity} satisfies the following identity:
\begin{equation}
    \pipo(\lambda; a',b') = B(a,b)B(a',b')^{-1}\exp\{ \lambda(ab-a'b')/2\}\pipo(\lambda; a,b),
\end{equation}
where $(a',b')$ is any pair that satisfies $a'+b' = a+b$ and $a',b'>0$. 
\end{proposition}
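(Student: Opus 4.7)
The plan is a direct algebraic manipulation of the alternating series in equation~\eqref{eq:polyadensity}. The key observation is that once we fix the sum $s := a+b$, the dependence of each term on the individual values of $a$ and $b$ collapses to a single factor. Concretely, for any pair with $a+b=s$, the binomial coefficient $\binom{a+b+k-1}{k} = \binom{s+k-1}{k}$ and the linear factor $k + (a+b)/2 = k + s/2$ depend only on $s$, not on the split between $a$ and $b$. Moreover, the exponent factors as
\begin{equation*}
    (k+a)(k+b) = k^2 + (a+b)k + ab = k^2 + sk + ab,
\end{equation*}
so the only $k$-independent contribution that distinguishes $(a,b)$ from $(a',b')$ sits in the constant $ab$ inside the exponential, together with the overall normalizer $B(a,b)$.

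Using this, I would factor out $\exp(-ab\lambda/2)$ from every term of the series to obtain
\begin{equation*}
    \pipo(\lambda; a,b) = \frac{e^{-ab\lambda/2}}{B(a,b)} S_s(\lambda), \qquad S_s(\lambda) := \sum_{k=0}^\infty (-1)^k \binom{s+k-1}{k}\left(k + \frac{s}{2}\right)\exp\left\{-\frac{k^2 + sk}{2}\lambda\right\},
\end{equation*}
where $S_s(\lambda)$ depends on $(a,b)$ only through $s = a+b$. Applying the same identity to $(a',b')$ with $a' + b' = s$ gives $\pipo(\lambda; a',b') = B(a',b')^{-1} e^{-a'b'\lambda/2} S_s(\lambda)$.

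Dividing the two expressions eliminates the common series $S_s(\lambda)$ and yields
\begin{equation*}
    \frac{\pipo(\lambda; a',b')}{\pipo(\lambda; a,b)} = \frac{B(a,b)}{B(a',b')}\exp\!\left\{\frac{(ab - a'b')\lambda}{2}\right\},
\end{equation*}
which rearranges to the claimed identity. The only nontrivial point is that the manipulation involves an infinite alternating series; but since we only factor a common scalar $\exp(-ab\lambda/2)$ term-by-term and then take a ratio, no rearrangement or convergence subtleties arise beyond those already implicit in the validity of the representation~\eqref{eq:polyadensity} itself. Thus the main ``obstacle'' is simply recognizing the invariance of $\binom{s+k-1}{k}$ and $k + s/2$ under reparametrizations with fixed sum; once noted, the proof is a two-line computation.
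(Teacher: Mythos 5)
Your proof is correct and follows essentially the same route as the paper: both exploit that for fixed sum $a+b=a'+b'$ the binomial coefficient, the factor $k+(a+b)/2$, and the $k$-dependent part of the exponent $k^2+(a+b)k$ coincide term-by-term, so the two densities differ only by the constant factor $B(a,b)B(a',b')^{-1}\exp\{\lambda(ab-a'b')/2\}$. The only cosmetic difference is that you factor out a common series $S_s(\lambda)$ and take a ratio, whereas the paper substitutes directly inside the series; the content is identical.
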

See Section S.1 of the Appendix for all proofs, and additional details on the P\'olya distribution. 

\subsection{Multivariate logistic-beta distributions}

Based on the normal variance-mean mixture representation in equation~\eqref{eq:lbnormalvarmeanmixture}, we introduce a multivariate logistic-beta distribution that has the same univariate logistic-beta marginals.

\begin{definition} Let $\bfR$ be an $n\times n$ positive semidefinite correlation matrix. 
We say $\bm\eta$ follows an $n$-dimensional multivariate logistic-beta distribution with shape parameters $a,b>0$ and correlation parameter $\bfR$, denoted as $\bm\eta\sim \lb(a,b,\bfR)$, if 
\begin{equation}
 \bm\eta \mid \lambda \sim N_n\left\{0.5\lambda (a-b)\bm{1}_n,  \lambda \bfR\right\}, \quad \lambda \sim \mathrm{Polya}(a,b),
 \label{eq:mlbnormalvariancemeanmixture}
\end{equation} 
where $\bm{1}_n = (1,\dots,1)^\T$.
\end{definition}

Elementwise logistic transformation of a multivariate logistic-beta distribution induces a multivariate distribution on the unit hypercube $(0,1)^n$ with $\mathrm{Beta}(a,b)$ marginals. 
To see this, since $\bfR$ has a unit diagonal, each component $\eta_i$
is marginally logistic-beta distributed with shape parameters $a$ and $b$ for $i=1,\dots,n$.
Thus, logistic transformation $\{\sigma(\eta_1), \dots,\sigma(\eta_n)\}^\T$ follows a multivariate beta distribution with $\bfR$ controlling dependence.
See Figure~\ref{fig:multivariatelb} for illustrations. Note that $\lambda\bfR$ in equation~\eqref{eq:mlbnormalvariancemeanmixture} is a valid covariance matrix since $\lambda >0$ and $\bfR$ is a positive semi-definite correlation matrix. Hence, the eigenvalues of $\lambda\bfR$ are simply the eigenvalues $\bfR$ scaled by $\lambda$. Such a multivariate generalization is mentioned in \cite{Barndorff-Nielsen1982-xz}, and the density is studied in \cite{Grigelionis2008-um}.
However, there is no literature using this construction for multivariate beta modeling. Alternative multivariate constructions related to logistic-beta are proposed by \cite{Bradley2019-mx} for conjugate spatio-temporal modeling of multinomial data and by \cite{Kowal2019-od} for temporal dependence in shrinkage. 

\begin{figure}
 \centering
    \includegraphics[width = 0.9\linewidth]{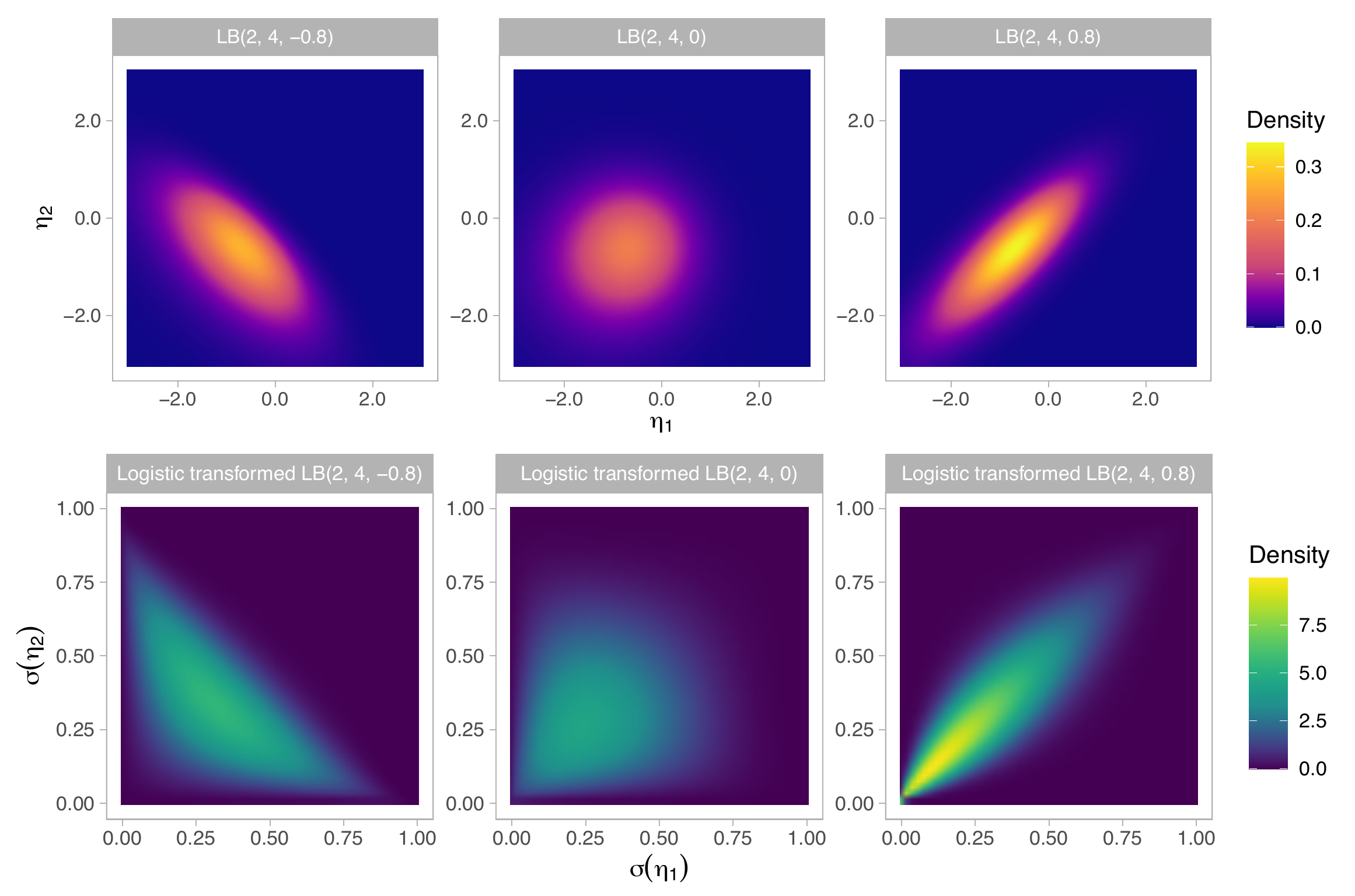}
\caption{Density plots of bivariate logistic-beta distributions with different correlation parameters $R_{12}\in\{-0.8,0,0.8\}$ and their logistic transformations. The logistic transformed random variables all have beta marginals with shape parameters $a=2$ and $b=4$.}
\label{fig:multivariatelb}
\end{figure}

The main advantage of this construction is that the correlation parameter $\bfR$ explicitly controls dependence with great flexibility.
In Proposition~\ref{prop:mlbmoments} and Corollary~\ref{coro:mlbcorrrange}, we present the first two moments of the multivariate logistic-beta and the range of pairwise correlation, illustrating how $\bfR$ controls the strength of linear dependence. 

\begin{proposition}
\label{prop:mlbmoments}
Let $\bm\eta \sim \lb(a,b,\bfR)$ with $R_{ij}$ being the $(i,j)$th element of $\bfR$. Then, we have
$
E(\eta_{i}) = \psi(a)-\psi(b)$, $\var(\eta_i) = \psi'(a) + \psi'(b)$ for $i=1,\dots,n$, and
\begin{equation*}
    \cov(\eta_i,\eta_j) = 
    \begin{cases} 
        2\psi'(a)R_{ij}, & \text{if } a=b,\\
    \psi'(a) + \psi'(b) + 2(R_{ij}-1)\{\psi(a) - \psi(b)\}/(a-b), &  \text{if } a\neq b,
    \end{cases}
    \end{equation*}
    where $\psi(x)$ and $\psi'(x)$ denote the digamma and the trigamma functions, respectively.
\end{proposition}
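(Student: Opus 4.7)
The plan is to compute every moment by conditioning on the Pólya mixing variable $\lambda$ and invoking the tower property. From the normal variance-mean mixture representation~\eqref{eq:mlbnormalvariancemeanmixture}, $E(\eta_i\mid \lambda) = 0.5\lambda(a-b)$ and $\cov(\eta_i,\eta_j\mid \lambda) = \lambda R_{ij}$ (with $R_{ii}=1$). Applying the laws of total expectation and total covariance, I obtain
\begin{align*}
E(\eta_i) &= 0.5(a-b)\,E(\lambda),\\
\cov(\eta_i,\eta_j) &= R_{ij}\,E(\lambda) + 0.25(a-b)^2\,\var(\lambda),
\end{align*}
so the task reduces to computing the first two moments of $\lambda\sim\mathrm{Polya}(a,b)$.

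For this, I would exploit the infinite-sum representation $\lambda = \sum_{k=0}^{\infty} 2\epsilon_k/\{(k+a)(k+b)\}$ stated just after equation~\eqref{eq:lbnormalvarmeanmixture}, where $\epsilon_k$ are iid standard exponentials. Term-by-term integration (justified by monotone convergence, since all terms are positive) gives
\[
E(\lambda) = \sum_{k=0}^{\infty}\frac{2}{(k+a)(k+b)},\qquad \var(\lambda) = \sum_{k=0}^{\infty}\frac{4}{(k+a)^{2}(k+b)^{2}}.
\]
When $a\neq b$, I apply the partial fraction $1/\{(k+a)(k+b)\} = \{1/(k+a) - 1/(k+b)\}/(b-a)$ together with the classical digamma series $\psi(b)-\psi(a) = \sum_{k=0}^{\infty}\{1/(k+a)-1/(k+b)\}$ to get $E(\lambda) = 2(\psi(a)-\psi(b))/(a-b)$. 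Squaring the partial fraction produces $1/(k+a)^2 + 1/(k+b)^2 - 2/\{(k+a)(k+b)\}$, and summing while invoking the trigamma series $\psi'(x) = \sum_{k=0}^{\infty} 1/(k+x)^{2}$ yields
\[
\var(\lambda) = \frac{4}{(a-b)^{2}}\!\left[\psi'(a)+\psi'(b) - \frac{2(\psi(a)-\psi(b))}{a-b}\right].
\]

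Substituting these two expressions into the total-covariance formula produces the stated off-diagonal formula after routine cancellation; setting $i=j$ (so $R_{ii}=1$) collapses the result to $\var(\eta_i)=\psi'(a)+\psi'(b)$, and $E(\eta_i) = \psi(a)-\psi(b)$ is immediate from $E(\lambda)$. For the boundary case $a=b$ the mean-shift term vanishes and $E(\lambda)=2\sum_{k=0}^{\infty}1/(k+a)^{2} = 2\psi'(a)$, giving $\cov(\eta_i,\eta_j) = 2\psi'(a)R_{ij}$; this also arises as the L'Hôpital limit of the $a\neq b$ expression, confirming continuity across the diagonal. I do not anticipate a real obstacle, only careful bookkeeping: the only mildly delicate step is the Pólya variance identity, where one must recognize that the cross term in the squared partial fraction is precisely what re-introduces $E(\lambda)$ into the expression for $\var(\lambda)$.
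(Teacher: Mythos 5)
Your proof is correct and follows essentially the same route as the paper: conditioning on the Pólya variable, applying the law of total covariance, and plugging in $E(\lambda)$ and $\var(\lambda)$ obtained from the infinite exponential-sum representation via digamma/trigamma series (which is exactly the paper's Proposition S.1). The only cosmetic difference is that you re-derive the marginal mean and variance of $\eta_i$ from the mixture representation, where the paper simply cites the known univariate logistic-beta moments from Johnson (1995, \S 23.10).
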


\begin{corollary}[Correlation range] 
\label{coro:mlbcorrrange}
Let $\bm\eta \sim \lb(a,b,\bfR)$ with some $2\times 2$ positive semidefinite correlation matrix $\bfR$. If $a=b$, $\corr(\eta_1,\eta_2)$ has a full range $[-1,1]$, and if $a\neq b$, the range of $\corr(\eta_1,\eta_2)$ is $[1-4\{\psi(a)-\psi(b)\}/[(a-b)\{\psi'(a)+\psi'(b)\}],1]$. 
\end{corollary}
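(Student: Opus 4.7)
The plan is to derive $\corr(\eta_1,\eta_2)$ as an explicit affine function of $R_{12}$ via Proposition~\ref{prop:mlbmoments}, and then observe that for a $2\times 2$ positive semidefinite correlation matrix $\bfR$, the off-diagonal entry $R_{12}$ attains every value in $[-1,1]$. The claimed range is then just the image of $[-1,1]$ under this affine map.

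First I would handle the case $a=b$. Proposition~\ref{prop:mlbmoments} gives $\var(\eta_i)=2\psi'(a)$ and $\cov(\eta_1,\eta_2)=2\psi'(a)R_{12}$, so $\corr(\eta_1,\eta_2)=R_{12}$, and the full range $[-1,1]$ follows immediately from the range of $R_{12}$.

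For $a\neq b$, I would divide the covariance formula by $\var(\eta_i)=\psi'(a)+\psi'(b)$ to obtain
\begin{equation*}
\corr(\eta_1,\eta_2) \;=\; 1 + \frac{2\{\psi(a)-\psi(b)\}}{(a-b)\{\psi'(a)+\psi'(b)\}}\,(R_{12}-1).
\end{equation*}
This is an affine function of $R_{12}$ whose slope is positive, since the digamma function $\psi$ is strictly increasing on $(0,\infty)$, which forces $\{\psi(a)-\psi(b)\}/(a-b)>0$ regardless of the sign of $a-b$. Consequently, as $R_{12}$ ranges over $[-1,1]$, the correlation increases monotonically from the lower endpoint obtained at $R_{12}=-1$, namely $1-4\{\psi(a)-\psi(b)\}/[(a-b)\{\psi'(a)+\psi'(b)\}]$, up to the upper endpoint $1$ at $R_{12}=1$, which matches the stated interval.

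There is no serious obstacle, but the one detail worth checking is attainability of the endpoints: for $R_{12}=\pm 1$ the matrix $\bfR$ is singular but still positive semidefinite (eigenvalues $0$ and $2$), so the normal variance-mean mixture in equation~\eqref{eq:mlbnormalvariancemeanmixture} remains well defined and $\bm\eta\sim\lb(a,b,\bfR)$ makes sense. Thus both endpoints are achieved, and the moment formulas from Proposition~\ref{prop:mlbmoments} continue to apply by continuity of $R_{12}\mapsto\cov(\eta_1,\eta_2)$.
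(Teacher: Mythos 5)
Your proposal is correct and follows essentially the same route as the paper: both derive $\corr(\eta_1,\eta_2)$ as an affine function of $R_{12}$ from Proposition~\ref{prop:mlbmoments} and read off the range as $R_{12}$ varies over $[-1,1]$, with the endpoints attained at $R_{12}=\pm 1$. Your added remarks on the positivity of the slope and on the positive semidefiniteness of the singular matrix at $R_{12}=\pm 1$ are fine details the paper leaves implicit, but the argument is the same.
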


From Proposition~\ref{prop:mlbmoments}, the covariance between $\eta_i$ and $\eta_j$ is a linear function of $R_{ij}$, aiding interpretation of $\bfR$. 
The P\'olya mixing variable $\lambda$ is shared across dimensions so that $R_{ij} = 0$ does not imply independence between $\eta_i$ and $\eta_j$. The covariance between the logistic transformed variables  $\cov\{\sigma(\eta_i),\sigma(\eta_j)\}$  lacks an analytical form, but Monte Carlo estimates are easily obtained.
While Corollary~\ref{coro:mlbcorrrange}
shows that arbitrarily high positive 
correlation can be captured for any choice of $(a,b)$, we have a nontrivial lower bound when $a\neq b$. 
The restriction on the strength of negative correlation is due to the shared asymmetric marginals \citep{Joe2006-ix}.

\subsection{Logistic-beta processes and correlation kernels}
We define a logistic-beta process on a discrete or continuous domain $\scrX$.
\begin{definition}
Let $\calR:\scrX\times \scrX\to [-1,1]$ be a correlation kernel which is positive semidefinite with $\calR(x,x) = 1$ for any $x\in \scrX$.   
We say $\{\eta(x) \in \mathds{R}: x\in \scrX\}$ is a logistic-beta process with shape parameters $a,b>0$ and correlation kernel $\calR$, denoted as $\eta(x)\sim \lbp(a,b,\calR)$, if every finite collection $\{\eta(x_1),\dots,\eta(x_n)\}^\T$ follows an $n$-dimensional multivariate logistic-beta with shape parameters $a,b$ and correlation parameter $\bfR$ with $(i,j)$th element $R_{ij} = \calR(x_i,x_j)$.
\end{definition}

From the definition of multivariate logistic-beta \eqref{eq:mlbnormalvariancemeanmixture} that shares a common P\'olya mixing variable and properties of the multivariate normal, 
the logistic-beta process is closed under marginalization and thus is a valid stochastic process as for 
other mixtures of Gaussian processes \citep{Yu2007-ej}. Specifically when $a=b$, it belongs to the family of elliptical processes \citep{Shah2014-kc}. The logistic transformation $\sigma\{\eta(x)\}$ has marginals distributed as $\mathrm{Beta}(a,b)$, with dependence induced by the correlation kernel $\calR$. In the following, we provide examples of correlation kernel constructions.

The logistic-beta process can capture groupwise, temporal, and spatial dependence in discrete or continuous domains.
When $\scrX$ corresponds to discrete time indices, we can consider an autoregressive order one correlation kernel $\calR(x, x') = \rho^{|x-x'|}$ for $|\rho|<1$. 
When $\scrX$ is a continuous spatial domain, a prominent example is the Mat\'ern kernel $\calR_M(x,x'; \varrho, \nu) = 2^{1-\nu}\Gamma(\nu)^{-1}(\|x-x'\|/\varrho)^\nu K_\nu(\|x-x'\|/\varrho)$ with parameters $\varrho, \nu>0$, where $K_\nu$ is the modified Bessel function of the second kind; see Figure~\ref{fig:lbp_matern} for realizations after logistic transformation. 
In Figure S.1.2 of the Appendix, we illustrate how $\corr\{\eta(x), \eta(x')\}$ and $\corr[\sigma\{\eta(x)\}, \sigma\{\eta(x')\}]$ change as a function of $\|x-x'\|$ under Mat\'ern correlation kernels and different choices of $(a,b)$.

\begin{figure}
\includegraphics[width=\textwidth]{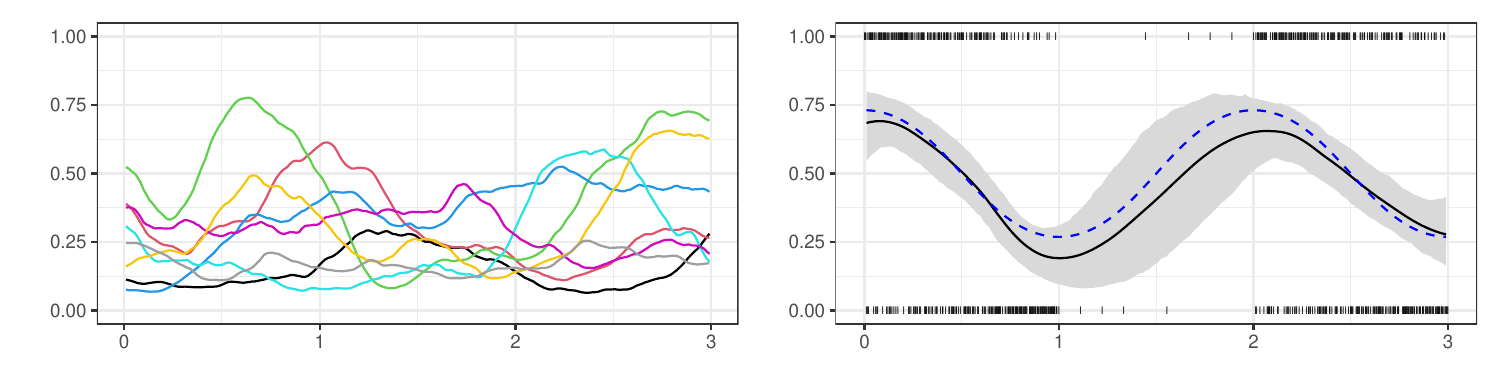}
    \caption{Illustration of the logistic-beta process on $\scrX=[0,3]$ with Mat\'ern correlation kernel. (Left) Eight realizations of $\sigma\{\eta(x)\}$ with $\eta(x)\sim \lbp(2,4,\calR_M)$, following $\mathrm{Beta}(2,4)$ marginally. (Right) Posterior mean (solid) and 95\% credible interval of latent success probabilities given the binary data $z(x_i)\indsim \mathrm{Ber}[\sigma\{\eta(x_i)\}]$, $i=1,\dots,n$ represented as small vertical bars. The dashed line corresponds to the true success probabilities.}
    \label{fig:lbp_matern}
\end{figure}

We can consider a more general construction of a correlation kernel $\calR$. Let $\bm\phi:\scrX \to \mathds{R}^q$ be a normalized feature map $\bm\phi(x) = \{\phi_1(x),\dots,\phi_q(x)\}^\T$ such that $\|\bm\phi(x)\|_2^2 =1$ \citep{Graf2003-by}. 
We assume a finite $q$, although infinite extensions are possible.
This normalized feature map can be obtained from an unnormalized feature map $\bm\varphi(x)$ with a bounded norm by letting $\bm\phi(x) =\bm\varphi(x)/\|\bm\varphi(x)\|_2$; 
for example, $\bm\varphi(x)$ can be spline basis functions for bounded domain $\scrX$, which induces non-linear dependence in $x$. Then, a valid correlation kernel $\calR$ can be defined as an inner product 
$\calR(x,x') = \langle \bm\phi(x),\bm\phi(x') \rangle.$
Such kernels lead to convenient representations of the logistic-beta process presented in Proposition~\ref{prop:lbpkernel_representations}, which are helpful in understanding the induced dependence structure and facilitating posterior inference. 

\begin{proposition} 
\label{prop:lbpkernel_representations}
Consider a normalized feature map $\bm\phi:\scrX\to \mathds{R}^q$ and the corresponding kernel $\calR(x,x') = \langle \bm\phi(x),\bm\phi(x') \rangle$. Let $\bm\Phi = \{\phi_k(x_i)\}_{ik}$ be an $n\times q$ basis matrix with $i$th row $\bm\phi(x_i)$. Then, a realization $\bm\eta = \{\eta(x_1),\dots,\eta(x_n)\}^\T$ from $\lbp(a,b,\calR)$ can be equivalently written as the linear predictor representation \eqref{eq:featuremap_linpred} and the hierarchical representation \eqref{eq:featuremap_hier},
\begin{align}
\bm\eta &= \{\psi(a)-\psi(b)\}(\bm{1}_n-\bm\Phi\bm{1}_q) + \bm\Phi \bm\beta, \quad \bm\beta \sim \lb(a,b,\bfI_q),\label{eq:featuremap_linpred}\\
\bm\eta &= 0.5\lambda (a-b)\bm{1}_n + \lambda^{1/2} \bm\Phi\bm\gamma,\quad   \lambda \sim \mathrm{Polya}(a,b),  \quad \bm\gamma\sim N_{q}(\bm{0}_q,\bfI_q). \label{eq:featuremap_hier}
\end{align}
\end{proposition}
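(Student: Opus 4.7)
My plan is to prove both representations by unwinding the normal variance-mean mixture in equation~\eqref{eq:mlbnormalvariancemeanmixture} and applying standard Gaussian linear-transformation identities. The key preliminary observation is that the $n\times q$ basis matrix $\bm\Phi$ satisfies $\bm\Phi\bm\Phi^\T = \bfR$: the $(i,j)$-entry equals $\bm\phi(x_i)^\T\bm\phi(x_j) = \calR(x_i,x_j) = R_{ij}$, and the diagonal entries equal $1$ by the normalization $\|\bm\phi(x)\|_2^2 = 1$.

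For the hierarchical representation \eqref{eq:featuremap_hier}, the argument is direct. Conditioning on $\lambda$, the quantity $\lambda^{1/2}\bm\Phi\bm\gamma$ with $\bm\gamma \sim N_q(\bm{0}_q, \bfI_q)$ is a linear transformation of a standard Gaussian, hence distributed as $N_n(\bm{0}_n, \lambda\bm\Phi\bm\Phi^\T) = N_n(\bm{0}_n, \lambda\bfR)$. Translating by the deterministic mean $0.5\lambda(a-b)\bm{1}_n$ yields $\bm\eta \mid \lambda \sim N_n\{0.5\lambda(a-b)\bm{1}_n, \lambda\bfR\}$, which together with $\lambda \sim \mathrm{Polya}(a,b)$ is exactly the defining mixture of $\lb(a,b,\bfR)$.

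For the linear predictor representation \eqref{eq:featuremap_linpred}, I would expand $\bm\beta \sim \lb(a,b,\bfI_q)$ through its own variance-mean mixture, $\bm\beta \mid \lambda \sim N_q\{0.5\lambda(a-b)\bm{1}_q, \lambda\bfI_q\}$ with $\lambda \sim \mathrm{Polya}(a,b)$. Pushing through the linear map gives $\bm\Phi\bm\beta \mid \lambda \sim N_n\{0.5\lambda(a-b)\bm\Phi\bm{1}_q, \lambda\bm\Phi\bm\Phi^\T\} = N_n\{0.5\lambda(a-b)\bm\Phi\bm{1}_q, \lambda\bfR\}$, so adding the $\lambda$-independent shift $\{\psi(a)-\psi(b)\}(\bm{1}_n - \bm\Phi\bm{1}_q)$ produces a conditional covariance $\lambda\bfR$ already matching \eqref{eq:featuremap_hier}. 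I expect the main obstacle to lie in reconciling the conditional mean $\{\psi(a)-\psi(b)\}(\bm{1}_n - \bm\Phi\bm{1}_q) + 0.5\lambda(a-b)\bm\Phi\bm{1}_q$ with the target $0.5\lambda(a-b)\bm{1}_n$; the deterministic digamma correction is calibrated so that $E[\bm\eta] = \{\psi(a)-\psi(b)\}\bm{1}_n$ holds marginally, and the remaining equality in distribution would follow by combining the normalization structure of $\bm\phi$ with the P\'olya moment identity implicit in Proposition~\ref{prop:polya_dens_identity}. A cleaner alternative would be to verify both representations via matching moment generating functions, using $\bm\Phi\bm\Phi^\T = \bfR$ together with the explicit MGF of the P\'olya law to collapse the expressions for \eqref{eq:featuremap_linpred} and \eqref{eq:featuremap_hier} into the same Laplace-type transform.
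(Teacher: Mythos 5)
Your treatment of the hierarchical representation \eqref{eq:featuremap_hier} is correct and essentially the same as the paper's: conditional on $\lambda$, $\lambda^{1/2}\bm\Phi\bm\gamma$ is $N_n(\bm{0}_n,\lambda\bm\Phi\bm\Phi^\T)=N_n(\bm{0}_n,\lambda\bfR)$, so adding the drift recovers the defining mixture \eqref{eq:mlbnormalvariancemeanmixture}. The genuine gap is in the linear predictor representation \eqref{eq:featuremap_linpred}. You correctly locate the obstacle --- the conditional mean $\{\psi(a)-\psi(b)\}(\bm{1}_n-\bm\Phi\bm{1}_q)+0.5\lambda(a-b)\bm\Phi\bm{1}_q$ does not match $0.5\lambda(a-b)\bm{1}_n$ --- but you never close it, and neither of your suggested escapes works as stated: the normalization $\|\bm\phi(x)\|_2=1$ constrains the Euclidean norm of each row of $\bm\Phi$, not its row sum, so it does not give $\bm\Phi\bm{1}_q=\bm{1}_n$; and Proposition~\ref{prop:polya_dens_identity} is a density identity for parameter pairs with fixed sum, not a moment identity, so it is not the relevant tool.

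The paper argues in the opposite direction: starting from the hierarchical form it regroups
\[
\bm\eta=0.5\lambda(a-b)(\bm{1}_n-\bm\Phi\bm{1}_q)+\bm\Phi\{0.5\lambda(a-b)\bm{1}_q+\lambda^{1/2}\bm\gamma\},
\]
recognizes the bracketed vector as exactly a draw $\bm\beta\sim\lb(a,b,\bfI_q)$ (the defining $q$-dimensional mixture with correlation $\bfI_q$, sharing the same $\lambda$), and then handles the leading intercept term by replacing $0.5\lambda(a-b)$ with its expectation, using the P\'olya mean $E(\lambda)=2\{\psi(a)-\psi(b)\}/(a-b)$ from the appendix result on P\'olya moments (Proposition~\ref{prop:polyameanvar}). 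That regrouping-plus-mean-substitution step is the ingredient your argument is missing; your MGF alternative would face the same difficulty, since the intercept term is random (a function of $\lambda$) in one representation and deterministic in the other, so the two Laplace transforms do not collapse by the covariance identity $\bm\Phi\bm\Phi^\T=\bfR$ alone. As written, your proof of \eqref{eq:featuremap_linpred} is therefore incomplete.
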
 

Representation \eqref{eq:featuremap_linpred} resembles basis function representations of a Gaussian process with normal priors on coefficients \citep[][\S 21.1]{Gelman2015-os}, but with a $q$-dimensional multivariate logistic-beta prior on the coefficients.  
When we consider an independent Bernoulli response model with logit link $z(x_i)\sim \mathrm{Ber}[\sigma\{\eta(x_i)\}]$, equation \eqref{eq:featuremap_linpred} corresponds to the linear predictor of the logistic regression model with an additional fixed varying intercept term. 
The hierarchical model representation \eqref{eq:featuremap_hier} is useful for posterior inference. Conditional on $\lambda$, the logistic-beta process with the normalized feature map kernel is parameterized with normal coefficients $\bm\gamma$, enabling conditionally conjugate updates.

\section{Latent logistic-beta process model for nonparametric binary regression}
\label{sec:3}
\subsection{Model}
\label{sec:3.1}

We consider a nonparametric binary regression problem in which logit-transformed probabilities of binary responses have a logistic-beta process prior. We highlight advantages of this prior in simplifying posterior inference and prediction while facilitating extensions. 
Let $z(x)\in\{0,1\}$ be binary data indexed by $x\in \scrX$. To induce 
 a $\mathrm{Beta}(a,b)$ prior for $\pr\{z(x) = 1\}$ for each $x$,
while enabling inference on   
probabilities at arbitrary  $x^*$ values, we choose  
\begin{align}
z(x_i) \mid \eta(x_i) &\indsim \mathrm{Ber}[\sigma\{\eta(x_i)\}] \quad (i=1,\dots,n), \quad \eta(x) \sim \lbp(a,b,\calR). 
    \label{eq:lbp_bernoulli}
\end{align}
By logistic-beta properties, the prior probabilities are marginally $\mathrm{Beta}(a,b)$ distributed, providing a natural extension from the independent beta-Bernoulli model to a dependent process model. 
The right panel of Figure~\ref{fig:lbp_matern} shows
an example of posterior prediction.  We simulated 600 binary data points with $\pr\{z(x) = 1\}= \sigma\{\cos(\pi x)\}$, and fitted the latent logistic-beta process model \eqref{eq:lbp_bernoulli} with $a=2,b=4$ and Mat\'ern correlation kernel with fixed parameters $(\varrho, \nu) = (0.3, 1.5)$. 
The middle part is slightly biased towards zero and has a wider credible interval, reflecting shrinkage towards the prior and uncertainty due to sparse data. 

\subsection{Posterior computation}
\label{section:3.2}
We describe a Markov chain Monte Carlo
algorithm for model \eqref{eq:lbp_bernoulli}.
The key components are P\'olya-Gamma data augmentation \citep{Polson2013-gb} and the normal variance-mean mixture representation \eqref{eq:mlbnormalvariancemeanmixture}, which leads to a latent Gaussian model conditional on the P\'olya mixing variable $\lambda$.
In Algorithm~\ref{alg:lb-bernoulli}, we describe a blocked Gibbs sampler. We use notations 
$\bfz = \{z(x_1),\dots,z(x_n)\}^\T$,  
$\bm\omega =  (\omega_1,\dots,\omega_n)^\T$ and $\bm\Omega = \diag(\bm\omega)$ for P\'olya-Gamma variables, $\bm\eta=(\eta_1,\dots,\eta_n)^\T = \{\eta(x_1),\dots,\eta(x_n)\}^\T$ and $\bfR = \{ \calR(x_i,x_j)\}_{i,j}$. Here we assume that the correlation kernel $\calR$ is known for conciseness, but incorporating unknown correlation kernel parameters is straightforward via an additional sampling step.

\begin{algorithm}[t]
\small
\caption{One cycle of a blocked Gibbs sampler for model \eqref{eq:lbp_bernoulli}.}
 {[1]} Sample $\omega_i \mid \eta_i \indsim \mathrm{PG}(1, \eta_i)$ for $i=1,\dots,n$, the P\'olya-Gamma distribution. \vspace{2mm}
\\
 {[2]} Sample $\lambda$ from $\pi(\lambda \mid \bm\omega, \bfz)$, where $\bm\eta$ is marginalized out,
 \begin{equation}
 \pi(\lambda\mid \bm\omega, \bfz )\propto \pipo(\lambda; a,b)\times N_n\big\{ \bm\Omega^{-1}(\bfz - 0.5\bm{1}_n); 0.5\lambda(a-b)\bm{1}_n, \lambda \bfR + \bm\Omega^{-1}\big\}.
 \label{eq:polyaupdate}
 \end{equation}
  {[3]} Sample $\bm\eta$ from $\pi(\bm\eta \mid \lambda, \bm\omega, \bfz)\propto \pi(\bfz \mid \bm\eta, \bm\omega)\times \pi(\bm\eta\mid \lambda)$, which is multivariate normal. 
\label{alg:lb-bernoulli}
\end{algorithm}

Step 1 of Algorithm~\ref{alg:lb-bernoulli} updates P\'olya-Gamma auxiliary variables $\bm\omega$ using a rejection sampler \citep{Polson2013-gb}.
Data augmentation with P\'olya-Gamma auxiliary variables converts the Bernoulli likelihood \eqref{eq:lbp_bernoulli} into a normal likelihood in terms of logit probabilities $\bm\eta$ conditional on P\'olya-Gamma variables $\bm\omega$.
Leveraging \eqref{eq:mlbnormalvariancemeanmixture}, $\pi(\bm\eta \mid \lambda)$ is also a multivariate normal in terms of $\bm\eta$.
This leads to joint conditional $\pi(\lambda, \bm\eta \mid \bm\omega, \bfz)\propto \pi(\bfz\mid \bm\eta, \bm\omega)\pi(\bm\eta\mid \lambda)\pi(\lambda)$ that is multivariate normal in terms of $\bm\eta$. 
Thus, the computation of collapsed conditional $\pi(\lambda \mid \bm\omega,\bfz) = \int \pi(\lambda, \bm\eta \mid \bm\omega, \bfz) \rmd \bm\eta$ can integrate out $\bm\eta$ by exploiting normal-normal conjugacy, greatly improving mixing behavior. 
Furthermore, again by conjugacy, Step 3 of Algorithm~\ref{alg:lb-bernoulli} updates $\bm\eta$ from a multivariate normal; for example when correlation matrix $\bfR$ is full rank,  
\[
\bm\eta \mid \bm\omega, \lambda, \bfz \sim N_n\big[(\bm\Omega + \lambda^{-1}\bfR^{-1})^{-1}\{ (\bfz - 0.5\bm{1}_n)+ 0.5(a-b) \bfR^{-1}\bm{1}_n \}, (\bm\Omega + \lambda^{-1}\bfR^{-1})^{-1} \big].
\]

Thus, the logistic-beta process offers significant computational benefits compared to multivariate beta constructions that rely on 
coordinate-wise updating 
\citep{Trippa2011-sz,DeYoreo2018-zs} or Metropolis-Hastings in high-dimensional space \citep{Arbel2016-or}. 

Sampling $\lambda$ from a collapsed conditional in Step 2 remains nontrivial and simple random walk Metropolis-Hastings would involve the calculation of P\'olya density $\pipo$ that is numerically unstable to evaluate. Based on Proposition~\ref{prop:polya_dens_identity}, we propose a novel class of adaptive P\'olya proposals that bypasses the evaluation of the P\'olya density. 
Denote $\calL(\lambda) = N_n\big\{ \bm\Omega^{-1}(\bfz - 0.5\bm{1}_n); 0.5\lambda(a-b)\bm{1}_n, \lambda \bfR + \bm\Omega^{-1}\big\}$, which is proportional to $\int \pi(\bfz\mid \bm\eta, \bm\omega)\pi(\bm\eta\mid \lambda) \rmd\bm\eta$ in terms of $\lambda$. The adaptive P\'olya proposal scheme selects the proposal distribution as $\mathrm{Polya}(a',b')$ with suitably chosen $(a',b')$ pair such that $a'+b' = a+b$. Then, the acceptance ratio for a candidate $\lambda^\star \sim \mathrm{Polya}(a',b')$ becomes $\min(1,\alpha_{\mathrm{MH}})$ with 
\[
\alpha_{\mathrm{MH}} =\frac{\pipo(\lambda; a',b')}{\pipo(\lambda^\star; a',b')} \frac{\pipo(\lambda^\star; a,b)\calL(\lambda^\star)}{\pipo(\lambda; a,b)\calL(\lambda)} = \exp\{(\lambda - \lambda^\star)(ab-a'b')/2\}\frac{\calL(\lambda^\star)}{\calL(\lambda)}
\]
where the P\'olya density terms are simplified by Proposition~\ref{prop:polya_dens_identity}, bypassing direct evaluation of $\pipo$. In terms of choosing the pair $(a',b')$ in an adaptive fashion, we use the moment matching method with a running average of $\lambda$. See Section S.2.2 of the Appendix for details.

When $n$ is large, sampling from $n$-dimensional multivariate normal in Step 3 becomes computationally expensive. If the correlation kernel is chosen based on $q \ll n$ normalized feature maps, $\bfR = \bm\Phi \bm\Phi^\T$ is a low-rank matrix, and the hierarchical representation \eqref{eq:featuremap_hier} in Proposition~\ref{prop:lbpkernel_representations} can be employed. 
Then, instead of sampling $n$-dimensional multivariate normal, Step 3 becomes sampling $\bm\gamma$ from $q$-dimensional multivariate normal with covariance matrix $\bfV_{\bm\gamma} = (\bfI_q + \lambda \bm\Phi^\T\bm\Omega\bm\Phi)^{-1}$ and mean $\bfV_{\bm\gamma}\lambda^{1/2}\bm\Phi^\T\{\bfz-0.5\bm{1}_n-0.5\lambda(a-b)\bm\omega\}$ and setting $\bm\eta = 0.5\lambda(a-b)\bm{1}_n + \lambda^{1/2}\bm\Phi\bm\gamma$, greatly reducing computational burden. In Section S.2.1 of the Appendix, we present further computational strategies based on scalable Gaussian process methods that preserve marginal variances.

The Algorithm~\ref{alg:lb-bernoulli} can be further extended to settings where beta parameters $(a,b)$ are assumed to be unknown and a hyperprior is placed on $(a,b)$. The same blocking strategy can be employed by jointly sampling $(a,b,\lambda)$ from the collapsed conditional using particle marginal Metropolis-Hastings \citep{Andrieu2010-tm}. See Section S.2.3 of the Appendix for details.

\section{Logistic-beta dependent Dirichlet processes}
\subsection{Definition and prior support}
\label{sec:4.1}
The latent logistic-beta process model can be easily employed in more complex Bayesian nonparametric models. In this section, we introduce a logistic-beta dependent Dirichlet process for modeling dependent random probability measures. Random probability measures $\{G_x:x\in \scrX \}$ follow a logistic-beta dependent Dirichlet process (LB-DDP) with concentration $b$, correlation kernel $\calR$, and independent atom processes $\{\theta_h(x):x\in \scrX\}$, $h\in\mathds{N}$ with $G^0_x$ the base measure, if 
\begin{align}
    G_{x}(\cdot) = \sum_{h=1}^{\infty}\Big(\sigma\{\eta_h(x)\}\prod_{l<h}\left[1-\sigma\{\eta_l(x)\}\right]\Big)\delta_{\theta_h(x)}(\cdot), \,\,\,\, \eta_h(x)\iidsim \lbp(1, b, \calR), \label{eq:lb-ddp}
\end{align}
for $h\in\mathds{N}$. 
Since the stick-breaking ratios $V_h(x) := \sigma\{\eta_h(x)\}$ have $\mathrm{Beta}(1,b)$ marginals for any $x\in \scrX$ and are independent across $h=1,2\dots$, the random probability measure $G_x$ for any given $x\in \scrX$ marginally follows a Dirichlet process with concentration parameter $b$. 
The hyperparameter $b$ has a familiar interpretation, aiding in elicitation. 

The prior introduced in equation~\eqref{eq:lb-ddp} inherits many of the amenable properties common to dependent Dirichlet processes. For instance, under mild conditions, the topological support of the LB-DDP coincides with the space of collections of all probability measures whose support is included in the support of base measure indexed by $x$, as the following result shows.
\begin{theorem}
\label{thm:lbddpsupport}
    Consider an LB-DDP with an atom process $\{\theta(x):x\in \scrX\}$ with support $\Theta$ that can be represented with a collection of copulas with positive density w.r.t. Lebesgue measure. Then, the logistic-beta dependent Dirichlet process has full weak support.
\end{theorem}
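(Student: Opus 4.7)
The plan is to reduce full weak support to a finite-dimensional statement: for every finite grid $x_1,\dots,x_k \in \scrX$, every target $(F_1,\dots,F_k)$ with $\operatorname{supp}(F_i) \subseteq \operatorname{supp}(G^0_{x_i})$, and every weak neighborhood $U$ of $(F_1,\dots,F_k)$, the LB-DDP must assign positive prior mass to $\{(G_{x_1},\dots,G_{x_k}) \in U\}$. Since weak cylinders generate the product weak topology on $\prod_{x\in\scrX}\mathcal{P}(\Theta)$, this is equivalent to the stated support property. I would next approximate each $F_i$ by a discrete measure $\tilde F_i = \sum_{h=1}^H p_{ih}\,\delta_{\theta^*_{ih}}$ and reduce the problem to showing that, with positive prior probability, the truncated LB-DDP produces stick-breaking weights and atoms close to $\{p_{ih},\theta^*_{ih}\}_{i\le k,\, h\le H}$ simultaneously, while the residual tail mass is uniformly small across $i$.

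The central analytic step is to show that, for each $h$, the vector $\{\eta_h(x_1),\dots,\eta_h(x_k)\}$ places positive probability on every open subset of $\mathds{R}^k$. Using the normal variance-mean mixture representation in equation~\eqref{eq:mlbnormalvariancemeanmixture}, this vector is $N_k\{0.5\lambda_h(1-b)\bm{1}_k, \lambda_h \bfR\}$ conditional on $\lambda_h \sim \mathrm{Polya}(1,b)$, which has full support on $\mathds{R}^k$ whenever the $k\times k$ restriction of $\calR$ to the grid is strictly positive definite, a mild regularity satisfied by standard kernels at distinct points. Combined with the positive density of the P\'olya distribution on $(0,\infty)$, the multivariate logistic-beta has strictly positive density on all of $\mathds{R}^k$. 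Choosing $v^*_{ih}\in(0,1)$ that realize $p_{ih} = v^*_{ih}\prod_{l<h}(1-v^*_{lh})$, a small neighborhood of $\{\sigma^{-1}(v^*_{ih})\}_{i\le k,\, h\le H}$ then receives positive probability via independence of the $\eta_h$ across $h$.

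For the atoms, the copula assumption with positive Lebesgue density implies that the finite-dimensional law of $\{\theta(x_1),\dots,\theta(x_k)\}$ puts positive mass on every open product set intersected with $\operatorname{supp}(G^0_{x_1})\times\dots\times \operatorname{supp}(G^0_{x_k})$; independence across $h$ then allows all $H$ atom tuples to be jointly placed in small neighborhoods of $(\theta^*_{1h},\dots,\theta^*_{kh})$ with positive probability. Intersecting with the weight event and using the built-in independence between stick-breaking ratios and atoms yields a joint event of positive probability on which the truncated LB-DDP approximates each $\tilde F_i$ weakly at every $x_i$.

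The main obstacle is controlling the truncation remainder uniformly in $i$. Writing $V_l(x_i) = \sigma\{\eta_l(x_i)\}$, the unassigned mass after $H$ sticks at location $x_i$ equals $\prod_{l=1}^H\{1-V_l(x_i)\}$, with mean $\{b/(b+1)\}^H$ since $V_l(x_i)\sim \mathrm{Beta}(1,b)$ marginally and the sticks are independent across $l$. Markov's inequality and a union bound over $i=1,\dots,k$ force this remainder below any prescribed $\varepsilon > 0$ with probability close to one once $H$ is large. Crucially, sticks at indices $h>H$ are independent of those at indices $h\le H$, so the tail-smallness event is independent of the weight and atom events used to target the first $H$ levels, and their intersection retains positive probability. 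Combining weight control, atom control, and tail control produces an event of positive prior probability on which $(G_{x_1},\dots,G_{x_k}) \in U$, establishing full weak support.
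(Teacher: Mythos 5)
Your route is genuinely different from the paper's. The paper does not argue from first principles: it verifies the hypotheses of \citet{Barrientos2012-bx} by exhibiting, for each finite set $x_1,\dots,x_n$, the copula of $\bigl(F_B[\sigma\{\eta(x_1)\}],\dots,F_B[\sigma\{\eta(x_n)\}]\bigr)$, noting it has positive density (continuity and injectivity of $F_B$ and $\sigma$, with $\sigma$ the c.d.f.\ of a standard logistic variable), and then quoting their Corollary 1 and Theorem 3 to conclude full weak support; the general-atoms case is absorbed into the same framework via the copula assumption on the atom process. You instead give a self-contained construction: reduction to finite-dimensional weak cylinders, approximation by finitely supported measures, full support of the finite-dimensional logistic-beta via the normal variance-mean mixture, positivity for the atom vectors from the copula assumption, and a truncation argument. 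What your approach buys is transparency and independence from the external machinery; what the paper's buys is brevity and the fact that the delicate bookkeeping (weight matching, tail mass, simultaneity across locations) has already been done once and for all in \citet{Barrientos2012-bx}. Note also that your full-support step quietly assumes $\bfR$ restricted to the grid is strictly positive definite; the paper's claim that $\bfU$ has positive Lebesgue density makes the same implicit nondegeneracy assumption, so this is not a point of divergence, but it should be stated.

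There is, however, one step in your argument that is wrong as written: the tail control. You correctly identify the unassigned mass after $H$ sticks at $x_i$ as $\prod_{l=1}^{H}\{1-V_l(x_i)\}$, but this is a function of the \emph{first} $H$ sticks --- exactly the ones constrained by your weight event --- so the claim that the tail-smallness event ``is independent of the weight and atom events used to target the first $H$ levels'' because sticks with index $h>H$ are independent of those with $h\le H$ is false, and the Markov-plus-union-bound probability cannot simply be intersected with the targeting event by independence. The repair is easy and stays within your framework: choose the discrete approximations $\tilde F_i$ and the targets $v^*_{ih}$ so that $\sum_{h\le H} p_{ih} > 1-\varepsilon$ (for instance by sending the mass of the $H$th stick-breaking ratio close to one), and shrink the neighborhoods in the weight event so that on that event the remainder $\prod_{l=1}^{H}\{1-V_l(x_i)\}$ is automatically below $\varepsilon$ for every $i$; then the Markov bound is not needed at all, and positivity of the intersection follows from the full support of the finite-dimensional logistic-beta, independence across $h$, and independence of weights and atoms. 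With that correction your argument goes through.
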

\noindent The above statement follows from direct applications of the results described in \citet{Barrientos2012-bx} and is a key requirement for posterior consistency. In the next subsection, we investigate further properties unique to the LB-DDP prior in equation~\eqref{eq:lb-ddp}.

\subsection{Prior dependence properties}

To gain a deeper understanding of the dependence structure induced by the LB-DDP, in this section we provide a full characterization of two key quantities: (i) the probability that two realizations $\vartheta\sim G_x$ and $\vartheta'\sim G_{x'}$ from two discrete random probability measures are equal, which we refer to as the  \emph{tie probability}, and (ii) the correlation between $G_x$ and $G_{x'}$ evaluated at common set $B$. 
We begin by considering the single-atoms LB-DDP, where $\theta_h(x)\equiv \theta_h \iidsim G^0$ in \eqref{eq:lb-ddp} so that atoms do not vary across $x$. Such an example allows us to describe the dependence induced by the weights alone.

\begin{theorem}
\label{thm:lbddpcorr}
Consider a single-atoms LB-DDP $\{G_x:x\in\scrX\}$ with concentration parameter $b$, correlation kernel $\calR$, and non-atomic base measure $G^0$. 
Let $\mu(x,x') = E[\sigma\{\eta(x)\} \sigma\{\eta(x')\}]$ with $\eta(x)\sim \lbp(1,b, \calR)$. Then, for any realizations $\vartheta\sim G_x$ and $\vartheta'\sim G_{x'}$, it holds that
\begin{equation}
\label{eq:lbddptie}
\pr(\vartheta = \vartheta'\mid G_x,G_{x'}) ={(1+b)}/\left\{2\mu(x,x')^{-1}-(1+b)\right\}.
\end{equation}
Moreover, for any Borel set $B$, we have
\begin{equation}
\label{eq:lbddpcorr}
\corr\{G_{x}(B),G_{x'}(B)\} ={(1+b)^2}/\left\{2\mu(x,x')^{-1}-(1+b)\right\}. 
\end{equation}
\end{theorem}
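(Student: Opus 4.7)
The two claims reduce to a single computation, namely the value of $S_1 := \sum_{h=1}^\infty E[w_h(x)w_h(x')]$ where $w_h(x) = \sigma\{\eta_h(x)\}\prod_{l<h}[1-\sigma\{\eta_l(x)\}]$ are the stick-breaking weights. My strategy is to obtain $S_1$ in closed form by exploiting the i.i.d.\ structure across $h$ of the logistic-beta processes $\eta_h$, and then to translate it into both the tie probability and the correlation.

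\textbf{Step 1: reduce the tie probability to $S_1$.} Conditional on $G_x$ and $G_{x'}$, since $G^0$ is non-atomic the atoms $\{\theta_h\}_{h\ge 1}$ are almost surely distinct, hence $\vartheta = \vartheta'$ if and only if $\vartheta$ and $\vartheta'$ select the same index $h$. Conditional independence of $\vartheta$ and $\vartheta'$ given the weights thus yields $\pr(\vartheta = \vartheta'\mid G_x,G_{x'}) = \sum_h w_h(x)w_h(x')$, and taking expectations gives $\pr(\vartheta = \vartheta') = S_1$.

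\textbf{Step 2: compute $S_1$.} Because the processes $\{\eta_h(\cdot)\}_{h\ge 1}$ are independent across $h$, writing $V_h(y) = \sigma\{\eta_h(y)\}$ and using independence of $V_h$ from $\{V_l\}_{l<h}$ gives
\[
E[w_h(x)w_h(x')] = E[V_h(x)V_h(x')]\,\bigl(E[(1-V_1(x))(1-V_1(x'))]\bigr)^{h-1}.
\]
Set $\mu = \mu(x,x') = E[V_1(x)V_1(x')]$ and $c = E[(1-V_1(x))(1-V_1(x'))]$. Since $V_1(x)\sim \mathrm{Beta}(1,b)$ marginally, $E[V_1(x)] = E[V_1(x')] = 1/(1+b)$, so $c = 1 - 2/(1+b) + \mu$. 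Summing the geometric series,
\[
S_1 = \frac{\mu}{1-c} = \frac{\mu}{2/(1+b) - \mu} = \frac{1+b}{2\mu^{-1}-(1+b)},
\]
which establishes \eqref{eq:lbddptie}.

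\textbf{Step 3: derive the correlation.} Writing $I_h = \mathds{1}\{\theta_h\in B\}$, which are i.i.d.\ $\mathrm{Ber}(p)$ with $p=G^0(B)$ and independent of the weights, I expand
\[
E[G_x(B)G_{x'}(B)] = p\,S_1 + p^2\!\!\!\sum_{h\neq k}\!\!E[w_h(x)w_k(x')] = p\,S_1 + p^2(1-S_1),
\]
where the last equality uses $\sum_h w_h(x) = \sum_k w_k(x') = 1$ a.s. Subtracting $p^2$ yields $\cov\{G_x(B),G_{x'}(B)\} = p(1-p)S_1$. Since $G_x$ marginally follows a Dirichlet process with concentration $b$, $\var\{G_x(B)\} = p(1-p)/(1+b)$, and the ratio gives $\corr\{G_x(B),G_{x'}(B)\} = (1+b)S_1 = (1+b)^2/\{2\mu^{-1}-(1+b)\}$, proving \eqref{eq:lbddpcorr}.

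\textbf{Main obstacle.} The only subtle point is the telescoping/geometric identity in Step 2, which hinges on independence across the stick-breaking levels and on matching the mean $1/(1+b)$ implied by the $\mathrm{Beta}(1,b)$ marginal. Everything else is a bookkeeping exercise.
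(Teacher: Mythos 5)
Your proof is correct, and its first two steps (reducing the tie probability to $S_1=\sum_h E[w_h(x)w_h(x')]$ and summing the geometric series using the $\mathrm{Beta}(1,b)$ mean $1/(1+b)$) coincide with the paper's computation of the diagonal term, denoted $(*)$ there. Where you genuinely diverge is in the correlation: the paper evaluates the off-diagonal contribution $\sum_{h\neq k}E\big[w_h(x)w_k(x')\big]$ explicitly as a double geometric series (splitting into $k<h$ and $k>h$, with a reordering step justified by absolute convergence), whereas you bypass that computation entirely by noting that the indicators $\mathds{1}\{\theta_h\in B\}$ are i.i.d.\ $\mathrm{Ber}(p)$ independent of the weights and that $\sum_h w_h(x)=\sum_k w_k(x')=1$ almost surely, so the off-diagonal sum is simply $1-S_1$. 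This is a cleaner and shorter route to $\cov\{G_x(B),G_{x'}(B)\}=p(1-p)S_1$, and in fact it extends with no extra effort to the paper's more general Theorem~4.3 (replace $p$ by $G^0_{x,x'}(B)$ in the diagonal term), which is what the paper's longer explicit computation is designed to deliver. Two small points you lean on implicitly and could state: the a.s.\ normalization of the weights (which follows from $E[\prod_{l\le h}\{1-V_l(x)\}]=\{b/(1+b)\}^h\to 0$ and monotonicity), and the interchange of expectation with the infinite sums, which is immediate by Tonelli since all terms are nonnegative; also $\mu(x,x')<2/(1+b)$ (e.g.\ by Cauchy--Schwarz, $\mu\le 2/\{(1+b)(2+b)\}$) guarantees the geometric ratio $c<1$. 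None of these is a gap of substance.
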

The tie probability in equation~\eqref{eq:lbddptie} provides an intuitive summary of the dependence between $G_x$ and $G_{x'}$ under general single-atoms DDP. Further multiplying such a quantity by $1+b$, one also obtains the correlation between random probability measures in equation~\eqref{eq:lbddpcorr}. The quantities in~\eqref{eq:lbddptie} and~\eqref{eq:lbddpcorr} attain the maximum of $1/(1+b)$ and $1$, respectively, when $x = x'$. Interestingly, their minimum is achieved when $\mu(x,x')$ is minimized, though the value depends on $b$ due to the nontrivial correlation lower bound described in Corollary~\ref{coro:mlbcorrrange}. 
We emphasize that the choice of $b$ affects the minimum possible tie probability and correlation, and within that range, the tie probability and correlation can be controlled by the correlation kernel $\calR$. In Figure S.1.2 in the Appendix, we illustrate how the correlation in equation~\eqref{eq:lbddpcorr} changes as a function of the distance between $x$ and $x'$ under the Mat\'ern kernel.

\begin{figure}
\centering
\includegraphics[width=\textwidth]{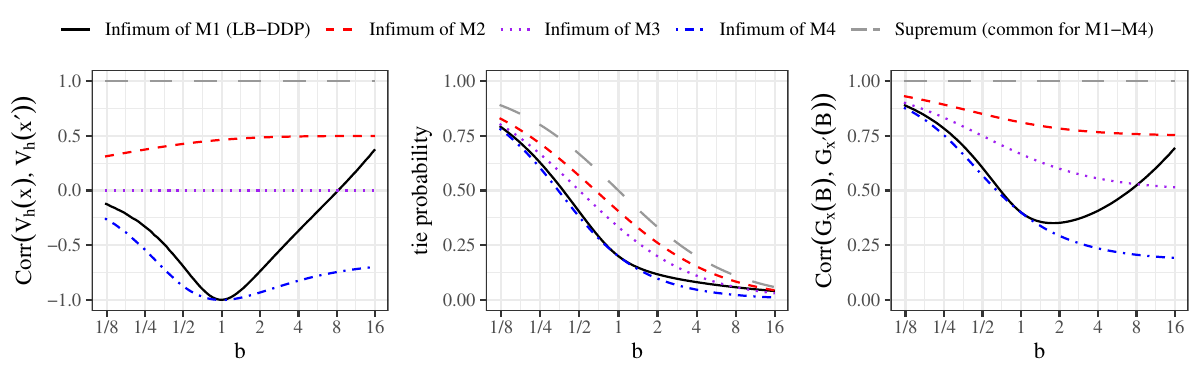}
    \caption{Analysis of the flexibility ranges induced by single-atoms DDPs (M1)--(M4) against the parameter $b$ (displayed in $\log_2$ scale). (Left) Ranges of $\corr\{V_h(x),V_h(x')\}$, the correlation between $\mathrm{Beta}(1,b)$ stick-breaking variables. (Center) Ranges of tie probability \eqref{eq:lbddptie}. (Right) Ranges of $\corr\{G_{x}(B),G_{x'}(B)\}$ \eqref{eq:lbddpcorr}.
    }
    \label{fig:ddpcorrrange}
\end{figure}
 
To investigate the improved flexibility of the LB-DDP prior, we compare the greatest lower bounds of correlation between stick-breaking ratios, tie probability \eqref{eq:lbddptie} and correlation between random probability measures \eqref{eq:lbddpcorr} against other DDP constructions with different multivariate beta distributions. Specifically, we consider four different single-atoms dependent Dirichlet process models based on dependent beta stick-breaking ratios: (M1) LB-DDP; (M2) The process proposed by \citet[][]{DeYoreo2018-zs}; (M3) The process proposed by \citet[][]{Nieto-Barajas2012-fr}; (M4) Copula-based dependent Dirichlet process. 
For M3, the minimum tie probability and correlation is achieved when $V_h(x)$ and $V_h(x')$ are independent of each other. For M4, the minimum is achieved with counter-monotonic beta random variables; examples include Gaussian copula with perfect negative correlation. 
 For completeness, we add brief descriptions of M2--M4 and their correlation lower bounds in Section S.1.3 of the Appendix.

Figure~\ref{fig:ddpcorrrange} illustrates how the range of the correlation between stick-breaking ratios, denoted as $\corr\{V_h(x),V_h(x')\}$, affects the ranges of tie probabilities and correlation between the random measure. We remark that, due to single-atoms assumption, $\corr\{G_{x}(B),G_{x'}(B)\}$ does not depend on $B$ and is always positive regardless of $\corr\{V_h(x),V_h(x')\}$; see \citet{Ascolani2024-cw}. The copula-based formulation M4 achieves the highest level of flexibility, which is an expected result due to Sklar's theorem \citep{Joe2014-ui}. However, inference with M4 often relies on Metropolis-Hastings in potentially high-dimensional space \citep{Arbel2016-or}, and the gap between M1 and M4 in Figure~\ref{fig:ddpcorrrange} corresponds to the cost of expressiveness that single-atoms LB-DDP pays to achieve computational tractability.
The behavior of the minimum tie probability and correlation of M1 varies distinctly under different values of $b$ compared to others. 
For small to moderate $b$, M1 can capture a much wider range of dependence compared to M2 and M3 by allowing negatively correlated stick-breaking ratios. 
When $b$ is large, the range of M1 becomes limited due to the nontrivial lower bound of correlation described in Corollary~\ref{coro:mlbcorrrange}.  This suggests that when  $b > 8$, the single-atoms assumption needs to be relaxed if one wants to ensure maximum flexibility relative to the independent weight model (infimum under M3). We formalize this in the following statement, providing a more general result of Theorem~\ref{thm:lbddpcorr}.

\begin{theorem}
\label{thm:lbddpcorr2}
    Consider the same settings as Theorem~\ref{thm:lbddpcorr}, but the single-atoms assumption is relaxed with i.i.d. (across $h$) atom processes $\{\theta_h(x):x\in\scrX\}$ with $G^0$ marginals for any $x$. Let $G^0_{x,x'}(B) = \pr(\theta_h(x)\in B, \theta_h(x')\in B)$ which does not depend on $h$. Then, 
\begin{equation}
    \corr\{G_{x}(B),G_{x'}(B)\} = \rho_0(B)\times{(1+b)^2}/\left\{2\mu(x,x')^{-1}-(1+b)\right\},
    \label{eq:lbddpcorr-general}
\end{equation}
where $\rho_0(B) = \{G^0_{x,x'}(B) -G^0(B)^2\}/\{G^0(B) - G^0(B)^2\}$. Moreover, $\rho_0(B)\in [-1,1]$.
\end{theorem}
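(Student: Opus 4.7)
The plan is to reduce the problem to the single-atoms case covered by Theorem~\ref{thm:lbddpcorr} by factoring out the atom-process contribution. Write $V_h(x) = \sigma\{\eta_h(x)\}$, $W_h(x) = V_h(x)\prod_{l<h}\{1-V_l(x)\}$, and $A_h(x,B) = \mathds{1}\{\theta_h(x)\in B\}$, so that $G_x(B) = \sum_{h\geq 1} W_h(x)\,A_h(x,B)$. By the DDP construction, the atom processes are i.i.d.\ across $h$ and independent of the weights, and each $A_h(x,B)$ is Bernoulli with mean $G^0(B)$.

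First, I would expand $E\{G_x(B)G_{x'}(B)\}$ as a double sum and split the diagonal from the off-diagonal contributions. Using independence of $\theta_h$ and $\theta_k$ for $h\neq k$, together with independence of atoms from weights, the diagonal contributes $G^0_{x,x'}(B)\cdot S$ with
\begin{equation*}
S := \sum_{h\geq 1} E\{W_h(x)\,W_h(x')\},
\end{equation*}
while the off-diagonal contributes $G^0(B)^2 \sum_{h\neq k} E\{W_h(x)\,W_k(x')\}$. Since $\sum_h W_h(x) = 1$ almost surely, one has $\sum_{h,k} E\{W_h(x)\,W_k(x')\} = 1$, so the off-diagonal sum equals $1-S$. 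Subtracting $E\{G_x(B)\}\,E\{G_{x'}(B)\} = G^0(B)^2$ yields
\begin{equation*}
\cov\{G_x(B),\,G_{x'}(B)\} = S\{G^0_{x,x'}(B) - G^0(B)^2\}.
\end{equation*}
Repeating the argument with $x=x'$, noting $G^0_{x,x}(B)=G^0(B)$ and using the standard identity $\sum_h E\{W_h(x)^2\} = 1/(1+b)$ for $\mathrm{Beta}(1,b)$ stick-breaking, gives $\var\{G_x(B)\} = (1+b)^{-1}\{G^0(B)-G^0(B)^2\}$.

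Second, I would identify $S(1+b)$ with the correlation already derived in Theorem~\ref{thm:lbddpcorr}. Specializing the above covariance and variance formulas to the single-atoms case $\theta_h(x) \equiv \theta_h(x')$ makes $G^0_{x,x'}(B) = G^0(B)$, hence $\rho_0(B) = 1$, and the formulas collapse to $\corr\{G_x(B),G_{x'}(B)\} = S(1+b)$. Comparison with the closed form of Theorem~\ref{thm:lbddpcorr} pins down $S(1+b) = (1+b)^2/\{2\mu(x,x')^{-1}-(1+b)\}$. Substituting back into the general covariance-over-variance ratio produces exactly equation~\eqref{eq:lbddpcorr-general}.

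Finally, $\rho_0(B) \in [-1,1]$ is immediate by inspection: $A_h(x,B)$ and $A_h(x',B)$ are Bernoulli$\{G^0(B)\}$ variables with joint success probability $G^0_{x,x'}(B)$, so $\rho_0(B)$ is literally their Pearson correlation. All summands appearing in the expansions are nonnegative, so Tonelli's theorem justifies the exchange of expectation and infinite summation without additional integrability work. I expect the only place requiring care to be the bookkeeping in the first step (diagonal versus off-diagonal, plus the use of $\sum_h W_h(x) = 1$ to eliminate the off-diagonal sum); after that, the calculation is essentially algebraic and Theorem~\ref{thm:lbddpcorr} supplies the final closed form.
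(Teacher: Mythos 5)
Your proof is correct, and while it shares the paper's skeleton (expand $E\{G_x(B)G_{x'}(B)\}$ over pairs $(h,k)$, use independence of atoms from weights and across $h$ to pull out $G^0_{x,x'}(B)$ on the diagonal and $G^0(B)^2$ off the diagonal), you close the computation differently in two places. First, the paper evaluates the off-diagonal sum by an explicit double geometric-series manipulation (its term $(**)$, split into $k<h$ and $k>h$ and reordered), whereas you dispose of it in one line via $\sum_h W_h(x)=1$ a.s.\ and Tonelli, so the off-diagonal mass is just $1-S$; this is a genuinely cleaner device and yields the same covariance $\cov\{G_x(B),G_{x'}(B)\}=S\{G^0_{x,x'}(B)-G^0(B)^2\}$. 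Second, the paper computes the diagonal factor $S=\sum_h E(V_hV_h')\prod_{l<h}E\{(1-V_l)(1-V_l')\}$ directly as a geometric series, obtaining $\mu(x,x')/\{\mu(x)+\mu(x')-\mu(x,x')\}=(1+b)/\{2\mu(x,x')^{-1}-(1+b)\}$, and thereby proves Theorems~\ref{thm:lbddpcorr} and~\ref{thm:lbddpcorr2} simultaneously; you instead pin down $S$ by specializing to the single-atoms case and invoking Theorem~\ref{thm:lbddpcorr}. That is legitimate (Theorem~\ref{thm:lbddpcorr} is an established prior result, so there is no circularity, and $S$ depends only on the weight processes, so the same constant appears in both cases), but it makes your argument parasitic on the earlier theorem where a two-line geometric-series evaluation of $S$, using $\mu(x)=\mu(x')=1/(1+b)$, would make it self-contained — this is exactly what the paper does, in the reverse logical order. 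Minor pluses of your write-up: you derive $\var\{G_x(B)\}=G^0(B)\{1-G^0(B)\}/(1+b)$ from your own identity $\sum_h E\{W_h(x)^2\}=1/(1+b)$ rather than citing the standard Dirichlet process fact, and you give an explicit reason for $\rho_0(B)\in[-1,1]$ (it is the Pearson correlation of the two Bernoulli indicators $\mathds{1}\{\theta_h(x)\in B\}$ and $\mathds{1}\{\theta_h(x')\in B\}$), a claim the paper's proof leaves implicit.
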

\noindent For example, when the atom process is positively correlated but not identical across $x$, we have $G^0(B)^2 < G^0_{x,x'}(B) < G^0(B)$ so that $\corr\{G_x(B),G_{x'}(B)\}$ is multiplied by a fractional factor $\rho_0(B)$ compared to the single-atoms DDP case, alleviating the restricted range of correlation when $b$ is large.

In addition to the study of correlation structure, ensuring the continuity of the process as a function of covariates is important in the development of dependent Bayesian nonparametric processes \citep{MacEachern2000-lw}. In Theorem~\ref{thm:lbddpcontinuity}, we show that the continuity of correlation kernel $\calR$ at the diagonal leads to the continuity of the single-atoms LB-DDP as a function of a covariate in terms of correlation between random probability measures.

\begin{theorem} 
\label{thm:lbddpcontinuity}
Consider a single-atoms LB-DDP same as the setting of Theorem~\ref{thm:lbddpcorr}, and let $\calX$ be a compact subset of $\bbR^d$. If the correlation kernel $\calR(x,x')$ is continuous at every diagonal point $x=x'
$, then $\corr\{G_x(B),G_{x'}(B)\} \to 1$ as $x\to x'$.
\end{theorem}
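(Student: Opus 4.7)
The plan is to reduce the claim to a convergence statement for $\mu(x,x') = E[\sigma\{\eta(x)\}\sigma\{\eta(x')\}]$ via the closed-form expression derived in Theorem~\ref{thm:lbddpcorr}. Since $\calR(x,x)=1$ and $\calR$ is continuous at the diagonal, $\calR(x,x')\to 1$ as $x'\to x$, so it suffices to establish $\mu(x,x') \to \mu(x,x) = E[\sigma\{\eta(x)\}^2]$ and then verify that substituting this limit into equation~\eqref{eq:lbddpcorr} yields $1$.

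For the convergence of $\mu$, I would construct an explicit coupling from the normal variance--mean mixture representation in equation~\eqref{eq:mlbnormalvariancemeanmixture}. Draw $\lambda\sim\mathrm{Polya}(1,b)$ together with independent standard normals $Z_1, Z_2$ independent of $\lambda$, and set
\[
\eta(x) = \tfrac{1}{2}\lambda(1-b) + \lambda^{1/2}Z_1, \qquad \eta(x') = \tfrac{1}{2}\lambda(1-b) + \lambda^{1/2}\bigl\{\calR(x,x')Z_1 + \sqrt{1-\calR(x,x')^2}\,Z_2\bigr\}.
\]
A direct calculation shows that $(\eta(x),\eta(x'))$ has the joint distribution prescribed by the $\lbp$ at points $x,x'$. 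Since $\calR(x,x')\to 1$ as $x'\to x$, the difference $\eta(x')-\eta(x) = \lambda^{1/2}\{(\calR(x,x')-1)Z_1 + \sqrt{1-\calR(x,x')^2}\,Z_2\}$ converges to $0$ almost surely, and by continuity of $\sigma$, we get $\sigma\{\eta(x')\}\sigma\{\eta(x)\}\to \sigma\{\eta(x)\}^2$ a.s. Bounded convergence applied to this uniformly bounded integrand then yields $\mu(x,x') \to E[\sigma\{\eta(x)\}^2]$.

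To finish, a standard moment calculation on $\mathrm{Beta}(1,b)$ gives $E[\sigma\{\eta(x)\}^2] = 2/\{(1+b)(2+b)\}$, so that $2\mu(x,x)^{-1}-(1+b) = (1+b)(2+b)-(1+b) = (1+b)^2$ and equation~\eqref{eq:lbddpcorr} evaluates to $1$. The main subtlety I anticipate is justifying interchange of the limit with expectation while simultaneously averaging over the P\'olya mixing variable $\lambda$; the coupling above circumvents this by making the convergence sample-path almost sure, reducing the interchange to bounded convergence and avoiding any moment bound on $\lambda$ or direct manipulation of the numerically delicate density in equation~\eqref{eq:polyadensity}.
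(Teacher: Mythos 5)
Your proposal is correct, and it reaches the conclusion by a genuinely different route than the paper. The paper's proof stays at the level of mean-square continuity: it notes from Proposition~\ref{prop:mlbmoments} that $\cov\{\eta(x),\eta(x')\}$ is linear in $\calR(x,x')$, hence continuous on the diagonal, so $\eta(\cdot)$ is mean square continuous; it then invokes a result of Wise (1977) to conclude that the transformed process $V(\cdot)=\sigma\{\eta(\cdot)\}$ is also mean square continuous, which gives $E\{V(x)V(x')\}\to E\{V(x')^2\}$, and finishes with the same $\mathrm{Beta}(1,b)$ second-moment computation $E\{V(x)^2\}=2/\{(b+1)(b+2)\}$ and substitution into equation~\eqref{eq:lbddpcorr} that you carry out. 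You instead prove the key limit $\mu(x,x')\to E[\sigma\{\eta(x)\}^2]$ directly, by conditioning on the shared P\'olya variable and coupling the two coordinates through $Z_1$ and $Z_2$ so that $\eta(x')-\eta(x)\to 0$ pathwise as $\calR(x,x')\to 1$, then applying bounded convergence to the $[0,1]$-valued integrand. Your coupling does reproduce the correct bivariate law (conditional on $\lambda$ the variances are $\lambda$ and the covariance is $\lambda\calR(x,x')$, for any sign of $\calR$), so the argument is sound. What your approach buys is self-containedness: it avoids citing external results on mean-square continuity and its preservation under nonlinear maps, and it makes the limit-exchange completely elementary. What the paper's approach buys is brevity and a statement (mean-square continuity of the process and of its logistic transform) that is of independent interest; neither proof makes essential use of the compactness of $\calX$ stated in the theorem.
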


\subsection{Logistic-beta dependent Dirichlet process mixture model}
We model a covariate-dependent response density as $f(y\mid x) = \int \calK(y\mid \theta) \mathrm{d}G_{x}(\theta)$, where $\{G_x\}$ follows a LB-DDP and $\calK(\cdot\mid \theta)$ is a mixture component distribution parametrized by $\theta$. For example, when we choose a normal kernel $\calK(y\mid \theta) = N_1(y; \mu, \tau^{-1})$ and atom processes $\theta_h(x) = \{\mu_h(x), \tau_h(x)\}$, $\mu_h(x) = \beta_{0h} + \beta_{1h}x$, $\tau_h(x)=\tau_h$ with semi-conjugate base measure, the density regression model becomes
\begin{align}
    f(y_i\mid x_i) &= \sum_{h=1}^\infty \Big(\sigma\{\eta_h(x_i)\}\prod_{l<h}[1-\sigma\{\eta_l(x_i)\}]\Big) N_1(y_i; \mu_h(x_i), \tau_h^{-1}), \label{eq:lbdddpmixturelikelihood} \\
    \eta_h(x) &\iidsim \lbp(1,b,\calR), \,\,\,\, (\beta_{0h},\beta_{1h})^\T\iidsim N_2(\bm{0}_2,\bm\Sigma_{\beta}), \quad \tau_h\iidsim \mathrm{Ga}(a_\tau,b_\tau) \label{eq:lbddpmixture2}
\end{align}
for $i=1,\dots,n$ and $h\in\mathds{N}$, where logistic-beta processes in \eqref{eq:lbddpmixture2} and atom processes both induce dependence on $x$, leading to a mixture of normal linear regression models with covariate-dependent weights. 

Algorithm~\ref{alg:lb-ddp} describes the blocked Gibbs sampler to estimate the parameters of the above model. 
To adapt Algorithm \ref{alg:lb-bernoulli} to model \eqref{eq:lbdddpmixturelikelihood}-\eqref{eq:lbddpmixture2}, we rely on the continuation-ratio logit representation of the multinomial distribution  
\citep{Tutz1991-jw}, which is also used in other stick-breaking process mixture models \citep{Rodriguez2011-pn,Rigon2021-ir}. 
Denoting the mixture component allocations as $\{s_i\}_{i=1}^n$ and 
$\scrZ_h = \{z_{ih}: s_i>h-1\}$ as a set of binary variables such that $z_{ih} = 1$ if $s_i=h$ and $z_{ih}=0$ if $s_i>h$, $\scrZ_h$ corresponds to observations of the latent logistic-beta process model for binary data with indices $I_h = \{i:s_i>h-1 \}$, independently across $h=1,2,\dots$.
We work on a truncated version having maximum number of components $H$ with $\sigma\{\eta_H(\cdot)\}\equiv 1$; if all the components are occupied we recommend increasing $H$.

\begin{algorithm}[t]
\caption{One cycle of the blocked Gibbs sampler for model \eqref{eq:lbdddpmixturelikelihood}--\eqref{eq:lbddpmixture2}.}
\label{alg:lb-ddp}
\small
 {[1]} Sample component allocations $s_i\in \{1,\dots,H\}$ for $i=1,\dots,n$, 
\[
\textstyle
\pr(s_i = h \mid - ) \propto \sigma\{\eta_h(x_i)\}\prod_{l=1}^{h-1}[1-\sigma\{\eta_l(x_i)\}]N_1(y_i ; \beta_{0h}+\beta_{1h}x_i, \tau_h^{-1}).
\]
 {[2]} For $h = 1,\dots,H-1$, with index set $I_h = \{i: s_i > h-1\}$, $\lambda_h$ and $\eta_h(\cdot)$, \\
 \hphantom{[0]} Run Algorithm~\ref{alg:lb-bernoulli} with binary data $\scrZ_h = \{\mathds{1}(s_i = h): i\in I_h $\}, update $\lambda_h$ and $\eta_h(\cdot)$. \vspace{2mm}\\
 {[3]} For $h=1,\dots,H$:
 \begin{enumerate}
     \item Sample component-specific regression parameters $(\beta_{0h},\beta_{1h}$) from
     \begin{equation*}
(\beta_{0h},\beta_{1h}\mid - )\, \indsim N_2\left(\tau_h\bfV_{h, \beta}\bfX_h^\T\mathbf{y}_h , \bfV_{h, \beta}\right),
 \end{equation*}
where $\bfV_{h, \beta} = (\tau_h\bfX_h^\T\bfX_h+\bm\Sigma_\beta^{-1})^{-1}$ and $\bfX_h$ and $\mathbf{y}_h$ are a matrix and a vector containing \\the rows $(1,x_i)$ and the scalars $y_i$ for which $s_i =h$, respectively.
 \item Sample component-specific precisions $\tau_h$ from
 \begin{equation*}
     \textstyle (\tau_h\mid - )\, \indsim \mathrm{Ga}\left(a_\tau + \sum_{i=1}^n\ind(s_i=h)/2, b_\tau + \sum_{i: s_i = h}(y_i-\beta_{0h}-\beta_{1h}x_i)^2/2\right).
 \end{equation*}
 \end{enumerate}
\end{algorithm}


\subsection{Extensions beyond dependent Dirichlet processes}
Logistic-beta processes can be further incorporated in dependent Bayesian nonparametric models beyond dependent Dirichlet processes. Consider two sequences $(a_h)_{h=1}^\infty$, $(b_h)_{h=1}^{\infty}$ such that $\sum_{h=1}^\infty \log(1+a_h/b_h)$ diverges to positive infinity \citep{Ishwaran2001-rd}. Then, given the correlation kernel $\calR$ and independent atom processes $\{\theta_h(x):x\in \scrX\}$, $h\in\mathds{N}$ with $G^0_x$ the base measure, one can consider the general class of logistic-beta dependent stick-breaking processes as
\begin{align}
    G_{x}(\cdot) = \sum_{h=1}^{\infty}\Big(\sigma\{\eta_h(x)\}\prod_{l<h}\left[1-\sigma\{\eta_l(x)\}\right]\Big)\delta_{\theta_h(x)}(\cdot), \quad \eta_h(x)\indsim \lbp(a_h, b_h, \calR), \label{eq:lb-dpyp}
\end{align}
for $h\in\mathds{N}$, using independent but non-identical logistic-beta processes. 
For example, if $a_h = 1-\varsigma$ and $b_h = b+h\varsigma$ for $\varsigma \in [0,1)$ and $b>-\varsigma$, this becomes a logistic-beta dependent Pitman-Yor process. 
That is, marginally for each $x$, $G_x$ follows a Pitman-Yor process with parameters $(\varsigma,b)$, and $\varsigma= 0$ reduces to LB-DDP. 

When considering mixture models, processes as in equation~\eqref{eq:lb-dpyp} may be preferred over a DDP mixture for situations when data exhibit strong heterogeneity across $x$ in terms of the number of clusters; see \citet{Bassetti2014-fa} for a detailed comparison. In such cases, the above dependent Pitman-Yor process construction based on logistic-beta provides computational benefits, since the inference algorithm is essentially the same as Algorithm~\ref{alg:lb-ddp}. Finally, estimation for the hyperparameters $a_h$ and $b_h$ is each component can be suitably performed via particle marginal Metropolis-Hastings without direct evaluation of the P\'olya density, as in the standard LB-DDP mixture model; see Section S.2.3 of the Appendix for a description.


\section{Empirical demonstrations}
\label{sec:5}

\subsection{Nonparametric binary regression simulation study}
\label{sec:5.1}
Our simulation studies on nonparametric binary regression have two main objectives. Firstly, we aim to assess the benefits of posterior computation strategies involved in latent logistic-beta process models. We focus on how the blocking strategy in the Gibbs sampler and adaptive P\'olya proposal affect mixing. 
Secondly, we compare the latent logistic-beta process with Gaussian copula models, which also give dependent random probabilities with beta marginals. 
We consider a nonparametric binary regression problem on a spatial domain  $\scrX = [0,1]^2$, where spatial coordinates $\{x_i\}_{i=1}^n$ are generated uniformly at random from $\scrX$.

We considered the latent logistic-beta process model \eqref{eq:lbp_bernoulli} with $(a,b) = (1,2)$ and the Gaussian copula model as two different data generation models. 
The Gaussian copula model first draws a sample $\big\{\zeta(x_1),\dots, \zeta(x_n)\big\}^\T$ from a mean zero unit variance Gaussian process and then applies transformation $\zeta \mapsto F_B^{-1}\{F_Z(\zeta)\}$ to yield beta marginals, where $F_Z$ is the standard normal c.d.f. and $F_B^{-1}$ is the inverse c.d.f. of the Beta($1,2$) distribution. 
For both the latent logistic-beta process and Gaussian copula models, we used Mat\'ern correlation kernels with fixed smoothness $\nu = 1.5$.
We considered different degrees of spatial dependence $\varrho\in\{0.1, 0.2, 0.4\}$, resulting in six different data generation scenarios.
Given the success probabilities, binary data $\{z(x_i)\}_{i=1}^n$ of size $n=500$ were independently sampled, with $n_{\mathrm{test}}=100$ data points reserved to evaluate predictive performance. 
We simulated 100 replicated datasets for each data generation scenario.

The latent logistic-beta process and Gaussian copula models are fitted to binary data with size $n_{\mathrm{train}} = 400$. 
Beta parameters are fixed at the true value $(a,b) = (1,2)$, and Mat\'ern correlation kernels are adopted with a fixed smoothness parameter $\nu = 1.5$. 
The Gibbs sampler for the latent logistic-beta process model is implemented in \texttt{R} \citep{R_Core_Team2023-wk}, and we explore four posterior inference algorithms, considering blocking and/or the adaptive P\'olya proposal scheme in Step 2 of Algorithm~\ref{alg:lb-bernoulli}. For the Gaussian copula model, we use \texttt{Stan}  \citep{Carpenter2017-lo} to implement Hamiltonian Monte Carlo. 
For both models, we run 2,000 Markov chain Monte Carlo iterations and record wall-clock running time, with the first 1,000 samples discarded as a burn-in. Posterior predictive distributions of success probabilities in the test dataset are compared with true values to assess predictive performance.

\begin{table}
\caption{Nonparametric binary regression simulation results comparing computational benefits of adaptive P\'olya proposal scheme in the blocked Gibbs sampler. Entries shown are averages over 100 replicates, with Monte Carlo standard errors shown in parentheses. ESS, effective sample size of $\lambda$; ESS/sec, ESS divided by running time in seconds.
}
\footnotesize
\centering
\begin{tabular}{cc c c c}
    \toprule
       Data generation & Algorithm settings & ESS & ESS/sec & Acceptance rate of $\lambda$ (\%)\\
    \midrule
    \multirow{2}{*}{\makecell{LBP, \\ $\varrho = 0.1$}} & Adapted & 245.08 (12.86) & 3.35 (0.18) & 54.28 (1.03) \\
     & Non-adapted & 196.35 (11.28)  & 2.69 (0.15) & 49.39 (1.39) \\
     \midrule
    \multirow{2}{*}{\makecell{LBP, \\ $\varrho = 0.2$}} & Adapted & 257.01 (16.32) & 2.89 (0.18) &  62.26 (1.12) \\
      & Non-Adapted &  247.83 (16.53) &  2.82 (0.19) & 57.60 (1.47)\\
     \midrule
    \multirow{2}{*}{\makecell{LBP, \\ $\varrho = 0.4$}} & Adapted & 368.26 (17.54) & 4.12 (0.20) & 66.12 (0.98) \\
      & Non-adapted & 328.72 (19.00) & 3.67 (0.21) & 61.45 (1.32) \\
    \bottomrule
  \end{tabular}

  \label{table:sim1}
\end{table}

Table~\ref{table:sim1} reports the benefits of the adaptive P\'olya proposal scheme based on the mixing behavior of $\lambda$. Here ``non-adapted'' corresponds to the fixed independent Metropolis-Hastings proposal distribution of $\lambda$ using prior distribution $\mathrm{Polya}(a,b)$. 
We consider effective sample size (ESS) based on 1,000 posterior samples, effective sample rate (ESS/sec), and an acceptance rate of $\lambda$ from an independent Metropolis-Hastings sampler where a higher acceptance rate is desirable \citep[][\S 7.6.1]{Robert2004-de}. 
The adaptive P\'olya proposal scheme improves the acceptance rate of $\lambda$ in all scenarios. 
In Section S.3.1 of the Appendix, we also show that the blocking scheme in Steps 2 and 3 in Algorithm~\ref{alg:lb-bernoulli}, which is made possible due to the normal mixture representation of the logistic-beta, has a substantial impact on the mixing behavior, leading to significant improvement on the effective sample size of $\lambda$.

\begin{table}
\caption{Nonparametric binary regression simulation results comparing latent logistic-beta process (LBP) and Gaussian copula models (GauCop). 
  Entries are averaged over 100 replicates, with Monte Carlo standard errors in parentheses. RMSE, root mean squared error; CRPS, continuous ranked probability score; mESS/sec, multivariate effective sample size of latent success probabilities divided by running time in seconds.}
 \footnotesize
 \centering
  \begin{tabular}{ccc cc cc}
    \toprule
   \multirow{2}{*}{\makecell{Data\\ generation}} & \multirow{2}{*}{Model} & \multicolumn{2}{c}{RMSE $\times 100$} & \multicolumn{2}{c}{mean CRPS $\times 100$} & \multirow{2}{*}{mESS/sec}  \\
   \cmidrule(lr){3-4} \cmidrule(lr){5-6}
   &  & training & test & training & test & \\
     \midrule
    \multirow{2}{*}{\makecell{GauCop, \\ $\varrho = 0.1$}} & LBP & 11.93 (0.14) & 12.32 (0.17) & 6.59 (0.10) & 6.80 (0.11) & 21.11 (0.21)\\
     &GauCop & 11.82 (0.13) & 12.24 (0.16) & 6.48 (0.09) &6.71 (0.11) & 0.48 (0.01)\\
    \midrule
    \multirow{2}{*}{\makecell{GauCop, \\ $\varrho = 0.2$}} & LBP &8.67 (0.15) & 8.80 (0.16) & 4.78 (0.10) & 4.85 (0.10) & 17.58 (0.16) \\
     & GauCop &  8.61 (0.16) &  8.75 (0.17) & 4.74 (0.10) & 4.82 (0.11) & 0.40 (0.01)\\
    \midrule
    \multirow{2}{*}{\makecell{GauCop, \\ $\varrho = 0.4$}} & LBP & 6.11 (0.16) & 6.14 (0.16) & 3.39 (0.10) & 3.41 (0.10) & 17.41 (0.17) \\
     &GauCop &  6.10 (0.16) & 6.13 (0.16) & 3.38 (0.10) & 3.40 (0.10)  & 0.47 (0.01)\\
    \bottomrule
  \end{tabular}
  
    \label{table:sim2}
\end{table}

Table~\ref{table:sim2} reports the in-sample and predictive performance of the latent logistic-beta process and Gaussian copula models, as well as the sampling efficiency with Gaussian copula data generation scenarios. In terms of root mean square error and mean continuous ranked probability score of latent success probabilities across both training and test data, the models produce nearly identical results. Although data were generated from the Gaussian copula model, the latent logistic-beta process model is able to recover latent probabilities as accurately as the true model. 

Regarding sampling efficiency of latent success probabilities measured by multivariate effective sample size \citep[][]{Vats2019-wo} divided by the algorithm's running time in seconds, the latent logistic-beta process model dramatically outperforms the Gaussian copula model due to the significantly longer running time of the Gaussian copula model. 
Although the implementation of the Gaussian copula model with the \texttt{Stan} program is based on C++ code with highly optimized algorithms, the computation of the log-posterior and its gradient at each step demands high computational costs. The algorithm of the latent logistic-beta process model is currently implemented using \texttt{R} software, which has the potential for improvement in running time with a lower-level programming language. 
Section S.3.1 of the Appendix contains further details on simulation settings and additional simulation results under different data generation scenarios. 

\subsection{Bayesian density regression simulation study}

We now assess the density regression performance of the LB-DDP mixture model using simulated data. Specifically, we compare the LB-DDP with the linear DDP \citep{De-Iorio2009-kz}, which shares the same model as \eqref{eq:lbdddpmixturelikelihood}-\eqref{eq:lbddpmixture2} but $\sigma\{\eta_h(x_i)\}$ are replaced with $V_h\iidsim \mathrm{Beta}(1,b)$. This is a single-weights DDP where the only atoms depend on $x$. Hence, the difference between the two lies in the logistic-beta process prior alone, allowing us to evaluate the impact of its additional flexibility explicitly. 
We also compare LB-DDP with the logit stick-breaking process \citep[LSBP,][]{Ren2011-el}. While it is not a DDP, the LSBP shares similar characteristics with LB-DDP by placing logistic transformed latent Gaussian processes in stick-breaking ratios, effectively acting as surrogate for a dependent Dirichlet process with covariate-dependent weights.

We consider two different data generation scenarios from two previous studies, which we refer to as Scenario A \citep{Dunson2007-wj} and Scenario B \citep{Wade2025-ef}, respectively. The true conditional densities are defined as
\begin{align}
    p_A(y\mid x_i) &= \exp(-2x_i)N_1(y; x_{i}, 0.1^2) + \{1-\exp(-2x_i)\}N_1(y; x_i^4, 0.2^2),\\
p_B(y\mid x_i) &= 
\begin{cases} 
N_1(y; 0, 0.2^2), & -2 \le x_i \le 2,\\
N_1(y; 2x_i-4, 0.05^2),& 2 < x_i \le 5,\\
N_1(y; 6, (x_i-5)^2/15 + 0.01),& 5 < x_i \le 10,\\
\end{cases}\
\end{align}
and we generated covariates from $x_i\iidsim \mathrm{Unif}(0,1)$ for Scenario A and $x_i\iidsim \mathrm{Unif}(-2,10)$ for Scenario B.  With two sample sizes $n\in\{500,2000\}$, we simulated 100 replicated datasets for both scenarios.

In each Scenario, we adopt the following modeling choices. For the correlation kernel $\calR$ in LB-DDP, we choose a normalized feature map kernel based on a natural cubic spline basis with $6$ degrees of freedom $\{\phi_k(x)\}_{k=1}^6$, defined on $[0,1]$ for Scenario A and $[-2,10]$ for Scenario B. The same set of basis functions are used in LSBP, by letting $\sigma^{-1}\{V_h(x)\} = \eta_h(x) = \bm\phi(x)^\T\bm\alpha_h$ and $\bm\alpha_h\sim N_q(\bm{0}_q, \sigma_\alpha^2 \bfI_q)$. This is in line with previous stick-breaking process literature \citep{Rigon2021-ir, Horiguchi2024-jh, Wade2025-ef}. 
For the hyperparameter choice, we consider the case when $b\in \{0.2, 1, 2\}$ for linear DDP and LB-DDP, and $\sigma_\alpha^2 \in\{10^2, \pi^2/3, 0.2^2\}$ for LSBP. These three different hyperparameter choices result in similar tie probabilities \eqref{eq:lbddptie} with $x=x'$ (the gray dash in center panel of Figure~\ref{fig:ddpcorrrange}), which are approximately $0.84, 0.50$, and $0.34$, respectively. These tie probabilities do not depend on $x$ under normalized feature maps. Our indicator of performance are the regression error $E[\{\widehat{E(y\mid x)}-E(y\mid x)\}^2]^{1/2}$ and the density error $E\{\int |\widehat{p(y\mid x)} - p(y\mid x)| \rmd y\}$ averaged across all test covariates $x$, similar to \citet{Wade2025-ef}. See Section S.3.2 of the Appendix for further details.

\begin{figure}
    \centering
    \includegraphics[width=\linewidth]{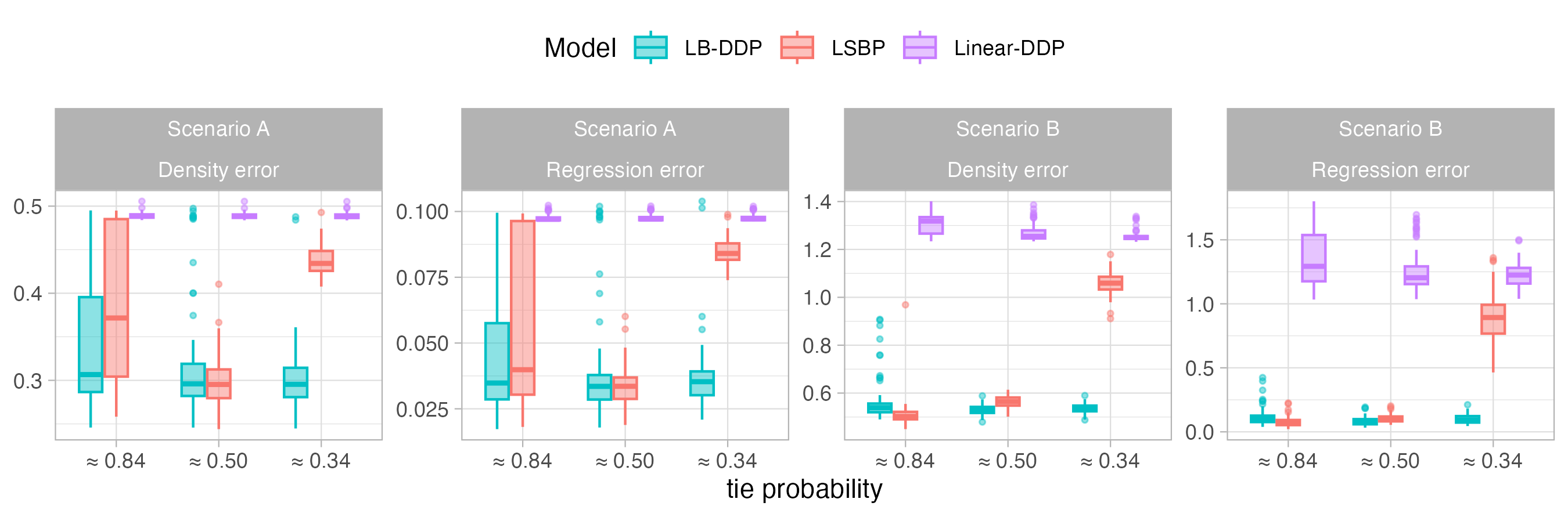}
    \caption{Density regression simulation results under two different data generation scenarios with sample size $n=500$ and three different hyperparameter settings based on prior tie probability. Density error and regression error quantify the accuracy of conditional density and conditional mean estimates averaged across test covariates, respectively.}
    \label{fig:sim2boxplotn500}
\end{figure}

Figure~\ref{fig:sim2boxplotn500} reports the result when $n=500$; boxplots for $n=2000$ are available in Figure S.3.1 of the Appendix. Compared to the linear DDP mixture model, the performance of the LB-DDP and LSBP mixture model are significantly improved in both scenarios, which shows a clear advantage of the flexibility offered by covariate-dependent weights. Meanwhile, compared to LB-DDP, the results of the LSBP mixture model are highly sensitive to the choice of hyperparameters, even though they are calibrated to induce similar prior tie probabilities. 
Thus, consistent with findings of \citet{Wade2025-ef}, the choice of hyperparameters in stick-breaking process priors can drastically affect performance, partly due to an unknown covariate-dependent marginal structure of prior that can abruptly change in certain regions of $x$. 
Overall, the density regression simulation result shows the clear benefits of LB-DDP in terms of flexibility, interpretability, and robustness to the choice of hyperparameters. 

\subsection{Application to pregnancy outcome data analysis}
\label{sec:5.2} 
We use the logistic-beta dependent Dirichlet process mixture model \eqref{eq:lbdddpmixturelikelihood}--\eqref{eq:lbddpmixture2} to analyze DDE exposure and pregnancy outcome data \citep{Longnecker2001-il}. DDE is a metabolite of the chemical compound DDT that was widely used as an insecticide.  
We analyze how the conditional distribution of gestational age of delivery $y$ changes with the level of exposure to DDE $x$, focusing on the probability of preterm birth, defined as gestational age of delivery less than 37 weeks. The dataset is publicly available in R package \texttt{BNPmix} \citep{Corradin2021-ag}. For illustration, we focus on the $n=1023$ women who smoked during pregnancy; results for other groups are in Section S.3.3 of the Appendix.

We compare the result with the LSBP mixture model in terms of the estimated probability of preterm birth given DDE exposure $\pr(y\le 37\mid x)$ as a function of $x$. For the choice of correlation kernel $\calR$ for LB-DDP and basis functions for LSBP, similar to \citet{Rigon2021-ir}, we consider normalized feature map kernels obtained from a natural cubic spline basis with 6 degrees of freedom. In both models, the atom processes are specified as in
\eqref{eq:lbdddpmixturelikelihood}, leading to a mixture of linear regressions with covariate-dependent weights. For the choice of hyperparameters, we consider the same three settings $b\in \{1/5, 1, 2\}$ as in the simulation study, denoted setting 1, 2, 3, respectively, where the LSBP hyperparameters are chosen such that the prior tie probabilities closely match the proposed model within each setting.

Posterior computation is performed with Gibbs samplers for both models. We obtained 35,000 iterations and the first 5,000 samples were discarded as burn-in. Both algorithms were implemented in \texttt{R} and had similar running times; 7.5 minutes for the LB-DDP mixture model and 5.6 minutes for the LSBP mixture model in setting 2.

\begin{figure}
\centering
    \includegraphics[width=\textwidth]{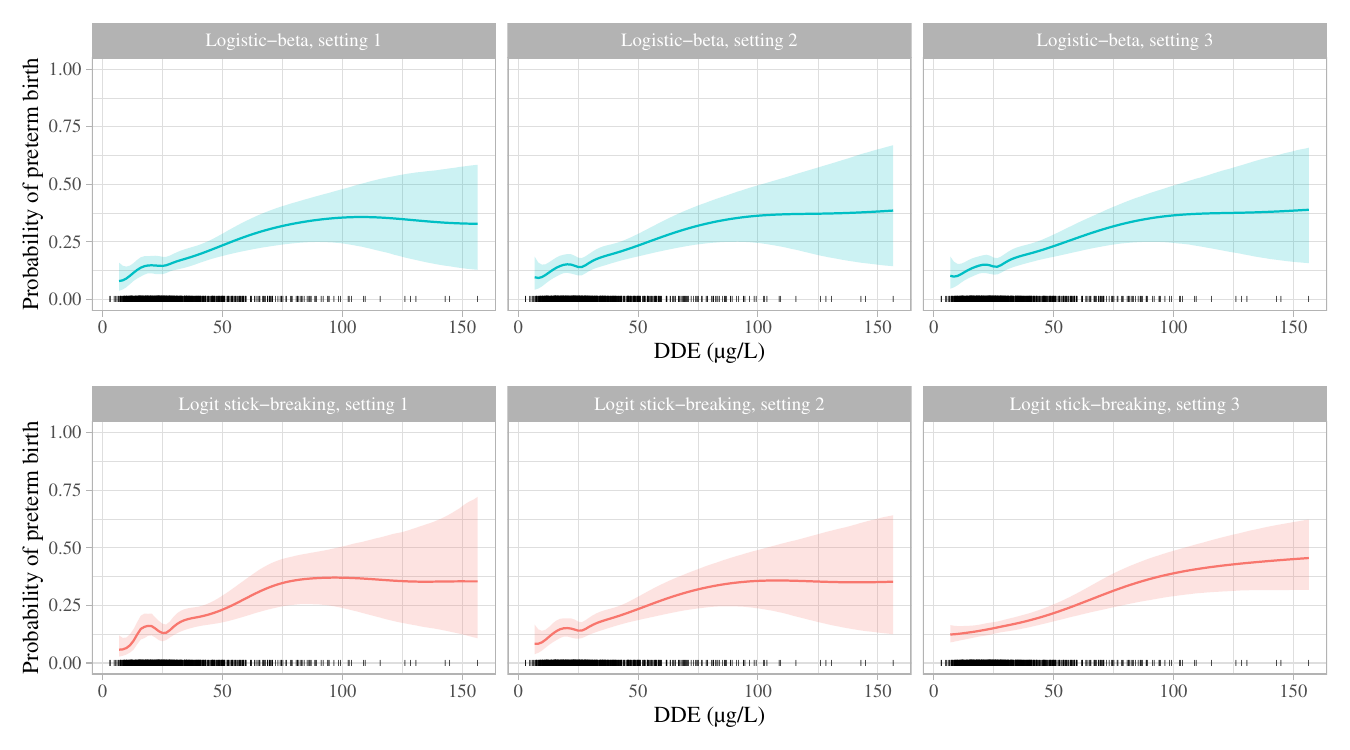}
    \caption{Estimated probability of preterm birth with 95\% credible intervals against DDE exposure level from the LB-DDP mixture model (top) and the LSBP mixture model (bottom) with three different hyperparameter settings corresponding to prior tie probabilities approximately 0.84, 0.50, and 0.34. DDE data are represented by small vertical bars.}
    \label{fig:ddp_pretermbirth_smoke}
\end{figure}

The estimated preterm birth probabilities given DDE exposures are summarized in Figure~\ref{fig:ddp_pretermbirth_smoke}. For both models, the probabilities
tend to increase with DDE levels, with wider credible intervals at high levels because of the increasingly sparse data. There is a stark difference between the two models in terms of how the estimates and credible intervals vary according to hyperparameter choice. Although each hyperparameter setting yields similar prior properties and both models use the same basis functions to incorporate dependence in $x$, Figure~\ref{fig:ddp_pretermbirth_smoke} shows that the posterior estimate of the LSBP is highly sensitive compared to the proposed model. 
Specifically, the hyperparameter choice of the LSBP model has a substantial impact on uncertainty estimates at high DDE exposure levels, where setting 1 and setting 3 of the LSBP model have much wider and narrower credible intervals, respectively, compared to the proposed model. 
This sensitivity, as noted in \citet{Wade2025-ef}, makes conclusions drawn from the LSBP model questionable, whereas the proposed LB-DDP model is much more robust to the choice of hyperparameters. 

Section S.3.3 of the Appendix contains details on pregnancy outcome data analysis settings and additional results with the non-smoking group of size $n=1290$ that gives the same conclusion. 

\section{Discussion}
\label{sec:6}

Although we have focused on two particular applications, logistic-beta processes can be usefully integrated within many other Bayesian nonparametric models. A particularly intriguing case is to incorporate covariate dependence into Indian buffet processes, relying on a beta-Bernoulli representation. An interesting application is to species distribution modeling in ecology \citep{Elith2009-pi}, where the features correspond to species and covariates correspond to environmental data, such as location or temperature. 
Such a model can accommodate and predict new species discovery, which is common in ecological studies of insect and fungi.

One promising future avenue of research is to develop a new family of stochastic processes having possibly different beta marginal distributions, while maintaining computational benefits when used for modeling dependent random probabilities. In the logistic-beta process, the common beta marginal property is due to the shared P\'olya mixing parameter $\lambda$. Such a new family could be used to develop even more flexible dependent Bayesian nonparametric models, such as DDP models with a covariate-dependent concentration parameter.

\section*{Software}
Codes for reproducing all analyses are available at the following Github repository: \url{https://github.com/changwoo-lee/logisticbeta-reproduce}.

\section*{Acknowledgements}

This project has received funding from U.S. National Science Foundation (DMS-2210456, DMS-2220231, IIS-2426762), U.S. National Institutes of Health (R01ES035625), U.S. Office of Naval Research (N00014-21-1-2510-P00004-7), and from the European Research Council (ERC) under the European Union’s Horizon 2020 research and innovation programme (856506).

\bibliographystyle{apalike}
\bibliography{paperpile.bib}

\newpage

\appendix
\begin{center}
    \Large \bf Appendices
\end{center}

\setcounter{section}{0}
\setcounter{equation}{0}
\setcounter{table}{0}
\setcounter{proposition}{0}
\renewcommand{\theequation}{S.\arabic{equation}}
\renewcommand{\thesection}{S.\arabic{section}} 
\renewcommand{\theproposition}{S.\arabic{proposition}} 
\renewcommand{\thetheorem}{S.\arabic{theorem}} 
\renewcommand{\thecorollary}{S.\arabic{corollary}} 
\renewcommand{\thelemma}{S.\arabic{lemma}} 
\renewcommand{\thetable}{S.\arabic{table}} 
\renewcommand{\thefigure}{S.\arabic{figure}}

\counterwithin{figure}{section}
\counterwithin{table}{section}
\counterwithin{proposition}{section}
\counterwithin{theorem}{section}

\section{Proofs and additional propositions}
\label{appendix:a}
\subsection{Properties of P\'olya distributions}
\label{appendix:a1}
The P\'olya distribution serves as a mixing distribution in the normal variance-mean mixture representation of logistic-beta, which includes standard logistic distribution as a special case. Thus, the P\'olya distribution generalizes the squared Kolmogorov--Smirnov distribution \citep{Andrews1974-ms}, used as the scale mixing distribution in Gaussian scale mixture representations of the standard logistic distribution  \citep{Chen1998-fn,Holmes2006-np}. 

We collect results related to the P\'olya distribution, including its moments (Proposition~\ref{prop:polyameanvar}) and density identity (Proposition~\ref{prop:polya_dens_identity}).

\begin{proposition} 
\label{prop:polyameanvar}
The mean and variance of $\lambda\sim \pipo(a,b)$ are
\begin{equation*}
\label{eq:polyamean}
E(\lambda) = 
\begin{cases} 2\{\psi(a)-\psi(b)\}/(a-b), & \text{if }a \neq b, \\
2\psi'(a), & \text{if }a = b. 
\end{cases}    
\end{equation*}

\begin{equation*}
\label{eq:polyavar}    
\var(\lambda) = 
\begin{cases}
\frac{4}{(a-b)^2}\left[\psi'(a) + \psi'(b) - \frac{2}{(a- b)}\{\psi(a) - \psi(b)\}\right],& \text{if }a \neq b, \\
2\psi^{(3)}(a)/3, & \text{if }a = b. 
\end{cases}
\end{equation*}
where $\psi^{(3)}$ is a 3rd derivative of digamma function $\psi$.
\end{proposition}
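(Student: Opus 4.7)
The plan is to work directly from the series definition $\lambda = \sum_{k=0}^\infty 2\epsilon_k/\{(k+a)(k+b)\}$ with $\epsilon_k \iidsim \mathrm{Exp}(1)$, rather than from the density in equation~\eqref{eq:polyadensity}. Since the summands are non-negative, independent, and have finite expectation, Tonelli and the independence of the $\epsilon_k$'s let us interchange expectation and summation, yielding
\begin{equation*}
E(\lambda) = 2 \sum_{k=0}^\infty \frac{1}{(k+a)(k+b)}, \qquad \var(\lambda) = 4 \sum_{k=0}^\infty \frac{1}{(k+a)^2(k+b)^2},
\end{equation*}
where I used $\var(\epsilon_k)=1$ and $\var$ of an infinite sum of independent summands equals the sum of variances provided the series converges (which it does by comparison with $\sum k^{-2}$ and $\sum k^{-4}$).

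For the mean with $a \neq b$, I would apply the partial fraction identity $1/\{(k+a)(k+b)\} = (b-a)^{-1}\{1/(k+a) - 1/(k+b)\}$ and recall the Gauss series $\psi(b)-\psi(a) = \sum_{k=0}^\infty \{1/(k+a) - 1/(k+b)\}$, giving $E(\lambda) = 2\{\psi(a)-\psi(b)\}/(a-b)$. For $a = b$, direct substitution gives $E(\lambda) = 2\sum_{k=0}^\infty (k+a)^{-2} = 2\psi'(a)$ from the series representation of the trigamma function.

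For the variance, squaring the partial fraction identity yields
\begin{equation*}
\frac{1}{(k+a)^2(k+b)^2} = \frac{1}{(a-b)^2}\left\{\frac{1}{(k+a)^2} + \frac{1}{(k+b)^2} - \frac{2}{(k+a)(k+b)}\right\}.
\end{equation*}
Summing termwise and substituting the previously established expressions $\sum_k (k+a)^{-2} = \psi'(a)$, $\sum_k (k+b)^{-2} = \psi'(b)$, and $\sum_k 1/\{(k+a)(k+b)\} = \{\psi(a)-\psi(b)\}/(a-b)$ gives the stated formula after multiplying by $4$. For $a=b$, the series becomes $4\sum_k (k+a)^{-4}$, which matches $2\psi^{(3)}(a)/3$ via the polygamma identity $\psi^{(n)}(x) = (-1)^{n+1} n!\sum_{k=0}^\infty (k+x)^{-n-1}$ with $n=3$.

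No step poses a real obstacle; the only subtlety is verifying termwise summation is legitimate (handled by non-negativity/absolute convergence) and matching signs when rewriting $(b-a)^2 = (a-b)^2$ and $-(\psi(b)-\psi(a))/(b-a) = -(\psi(a)-\psi(b))/(a-b)$ so the final expression agrees with the stated form. The case $a=b$ can also be recovered by taking the limit $b \to a$ in the $a \neq b$ formulas using L'Hôpital, which provides a useful consistency check.
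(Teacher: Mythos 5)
Your proposal is correct and follows essentially the same route as the paper's proof: both compute $E(\lambda)$ and $\var(\lambda)$ termwise from the infinite sum of independent exponentials, apply the partial fraction identity, and identify the resulting series with the digamma, trigamma, and $\psi^{(3)}$ series (the paper simply cites the corresponding DLMF identities rather than the Gauss/polygamma series by name). Your added remarks on justifying the interchange of expectation/variance with the infinite sum and the $b\to a$ consistency check are fine but do not change the substance of the argument.
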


\begin{proof}
First check the mean. When $a\neq b$,
    \begin{align*}
    E(\lambda) = \sum_{k=0}^\infty \frac{2}{(k+a)(k+b)}   &= \frac{2}{a-b}\sum_{k=0}^\infty\left(\frac{1}{k+b} - \frac{1}{k+a}\right)  = \frac{2\{\psi(a)-\psi(b)\}}{a-b},
    \end{align*}
    see \citet[][\S 5.7.6]{Olver2010-jj} using subtraction. When $a=b$, we have $\sum_{k=0}^\infty 2/(k+a)^2 = 2\psi'(a)$, see \citet[][\S 5.15.1]{Olver2010-jj}. 
    For variance, if $a\neq b$,
     \begin{align*}
    \var(\lambda) = \sum_{k=0}^\infty \frac{4}{(k+a)^2(k+b)^2}   &= \frac{4}{(a-b)^2}\sum_{k=0}^\infty \left\{\frac{1}{(k+a)^2} + \frac{1}{(k+b)^2} - \frac{2}{(k+a)(k+b)}\right\}\\
    &= \frac{4}{(a-b)^2}\left[\psi'(a) + \psi'(b) - \frac{2}{(a- b)}\{\psi(a) - \psi(b)\}\right]
    \end{align*}
    and when $a=b$, $\var(\lambda) = 4(-1)^{3+1}\psi^{(3)}(a)/3! = 2\psi^{(3)}(a)/3$; see \citet[][\S 25.11.12]{Olver2010-jj}.
\end{proof}

\subsubsection{Proof of Proposition~\ref{prop:polya_dens_identity}}

\begin{proof} 
Let $a,b>0$ be given and $(a',b')$ be a pair that satisfies $a'+b'=a+b$ and $a',b'>0$. Starting from \eqref{eq:polyadensity},  
\begin{align*}
       &\pipo(\lambda ; a', b') \\
       &= \sum_{k=0}^\infty(-1)^k \binom{a'+b'+k-1}{k}\frac{(k+(a'+b')/2)}{B(a',b')}\exp\left(-\frac{(k+a')(k+b')}{2}\lambda\right)\\
       &= \frac{B(a,b)}{B(a',b')}\sum_{k=0}^\infty \left\{(-1)^k   \binom{a+b+k-1}{k}\frac{(k+(a+b)/2)}{B(a,b)}\times \right.\\
       & \qquad \qquad\qquad \quad  \left.\exp\left(-\frac{(k+a)(k+b)}{2}\lambda + \frac{(ab-a'b')\lambda}{2}\right)\right\}\\
       &= \frac{B(a,b)}{B(a',b')}\exp\left(\frac{(ab-a'b')\lambda}{2}\right) \pipo(\lambda ; a, b), 
\end{align*}
which proves Proposition~\ref{prop:polya_dens_identity}.
\end{proof}

\subsection{Properties of multivariate logistic-beta and logistic-beta processes}
\label{appendix:a2}
We summarize results related to multivariate logistic-beta and logistic-beta processes, including the moments (Proposition~\ref{prop:mlbmoments}), correlation range (Corollary~\ref{coro:mlbcorrrange}), and representations of logistic-beta processes under normalized feature map kernel (Proposition~\ref{prop:lbpkernel_representations}).

\subsubsection{Proofs of Proposition~\ref{prop:mlbmoments} and Corollary~\ref{coro:mlbcorrrange}}

\begin{proof}
From \citet[][\S23.10]{Johnson1995-sk}, when $\eta$ is a univariate logistic-beta random variable with parameters $a$ and $b$, we have  
$E(\eta) = \psi(a)-\psi(b)$ and $\var(\eta) = \psi'(a)+\psi'(b)$. 
Let $\eta \sim \lb(a,b,R)$ with $R_{ij}$ being the $(i,j)$th element of $R$. By the law of total covariance, the covariance and correlation is given by
\begin{align*}
\cov(\eta_i,\eta_j) &= E\{\cov(\eta_i,\eta_j \mid \lambda)\} + \cov\{E(\eta_i \mid \lambda), E(\eta_j \mid \lambda)\} \\
&=  R_{ij}E(\lambda) + \cov\{0.5\lambda(a-b),0.5\lambda(a-b)\}\\
&= 
\begin{cases} 
        2\psi'(a)R_{ij}, & a=b,\\
    \psi'(a) + \psi'(b) + 2(R_{ij}-1)\left\{\frac{\psi(a) - \psi(b)}{a-b}\right\}, &  a\neq b.
    \end{cases}
\end{align*}
and 
\[
\corr(\eta_i,\eta_j) = \begin{cases} 
     R_{ij}, &a=b,\\
     1 + (R_{ij}-1)\left[\frac{2\{\psi(a) - \psi(b)\}}{(a-b)\{\psi'(a) + \psi'(b)\}}\right], &a\neq b.
     \end{cases}
\]
which shows Proposition~\ref{prop:mlbmoments}. 
Considering a bivariate case, if $a=b$,  the range of $\corr(\eta_1,\eta_2)$ is $[-1,1]$. If $a\neq b$, the upper bound 1 of $\corr(\eta_1,\eta_2)$ is attained at $R_{12} = 1$, and lower bound of $1-4\{\psi(a)-\psi(b)\}/\{(a-b)(\psi'(a)+\psi'(b))\}$ is attained at $R_{12} = -1$. This shows Corollary~\ref{coro:mlbcorrrange}. 
\end{proof}

\subsubsection{Proof of Proposition~\ref{prop:lbpkernel_representations}}
\begin{proof}
    Consider a normalized feature map $\bm\phi:\scrX\to \mathds{R}^q$ and the corresponding correlation kernel $\calR(x,x') = \langle \bm\phi(x), \bm\phi(x') \rangle$ for logistic-beta process. 
    Let $\bm\Phi = [\phi_k(x_i)]_{ik}$ be an $n\times q$ basis matrix with $i$th row corresponding to $\bm\phi(x_i)$. Then, for a realization $\bm\eta = (\eta(x_1),\dots,\eta(x_n))^\T$, we have
\begin{align*}
 \bm\eta \mid \lambda \sim N_n\left(0.5\lambda (a-b)\bm{1}_n,  \lambda\bm\Phi\bm\Phi^\T \right), \quad \lambda \sim \mathrm{Polya}(a,b).
\end{align*} 
However, $\bm\eta \mid \lambda$ is a rank deficient multivariate normal when $q<n$. By introducing $\bm\gamma \sim N_q(\bm{0}_q, \bfI_q)$, it can be equivalently written as 
\[
\eta = 0.5\lambda (a-b)\bm{1}_n + \lambda^{1/2} \bm\Phi\bm\gamma,\quad   \lambda \sim \mathrm{Polya}(a,b),  \quad \bm\gamma\sim N_{q}(\bm{0}_q,\bfI_q), 
\]
which corresponds to the hierarchical representation of Proposition~\ref{prop:lbpkernel_representations}. Also,
\begin{align*}
    \bm\eta &= 0.5\lambda (a-b)\bm{1}_n +  \lambda^{1/2} \bm\Phi\bm\gamma\\
    &= 0.5\lambda (a-b)\bm{1}_n - 0.5\lambda(a-b) \bm\Phi\bm{1}_q + 0.5\lambda(a-b)\bm\Phi\bm{1}_q +  \lambda^{1/2}\bm\Phi\bm\gamma\\
    &= 0.5\lambda(a-b)(\bm{1}_n - \bm\Phi\bm{1}_q) + \bm\Phi\{0.5\lambda (a-b)\bm{1}_q + \lambda^{1/2} \bm\gamma\}.
\end{align*}
Thus, if $\lambda \sim \mathrm{Polya}(a,b)$ is marginalized out, by the linearity of expectation and definition of multivariate logistic-beta, we have 
\[
\bm\eta = \{\psi(a)-\psi(b)\}(\bm{1}_n-\bm\Phi\bm{1}_q) + \bm\Phi \bm\beta, \quad \bm\beta\sim \lb(a,b,\bfI_q),
\]
where we used $E(\lambda) = 2\{\psi(a) - \psi(b)\}/(a-b)$ as in Proposition~\ref{prop:polyameanvar}. This shows the linear predictor representation of Proposition~\ref{prop:lbpkernel_representations}.
\end{proof}

\subsection{Logistic-beta dependent Dirichlet process and related models}
\label{appendix:a3}

We present results on the logistic-beta dependent Dirichlet process, including full weak support property (Theorem~\ref{thm:lbddpsupport}), correlation between random probability measures (Theorem~\ref{thm:lbddpcorr}), and continuity (Theorem~\ref{thm:lbddpcontinuity}).

\subsubsection{Proof of Theorem 4.1}


\begin{proof}
First consider the single-atoms logistic-beta dependent Dirichlet process, so that atoms $\theta_h$ are i.i.d. samples from the base measure $G_0$. Let $\eta(\cdot)\sim \lbp (1,b,\calR)$ and $F_B$ be the c.d.f. of $\mathrm{Beta}(1,b)$. Consider an $n$-dimensional continuous random variable 
$
\bfU = (F_B[\sigma\{\eta(x_1)\}],\dots,F_B[\sigma\{\eta(x_n)\}]) 
$
where its cumulative distribution $C_{x_1,\dots,x_n}$ is a copula since $\bfU$ has uniform marginals by construction. It is clear that $\bfU$ has a positive density w.r.t. Lebesgue measure since $F_B$ and $\sigma$ are all continuous and injective. 
 Now consider a collection of copula functions $\calC = \{C_{x_1,\dots,x_n}:x_1,\dots,x_n\in\scrX, n>1\}$. Since 
 $\sigma$ is the c.d.f. of a standard logistic random variable, 
 the collection of copula functions satisfies conditions of Corollary 1 of \citet{Barrientos2012-bx}. Also, the stick-breaking ratios of logistic-beta dependent Dirichlet process $\sigma\{\eta_h(x)\}$ are determined by the set of copulas $\calC$ and the set of beta marginal distributions with parameters $1$ and $b$. Therefore, the full weak support property of the logistic-beta dependent Dirichlet process follows from Theorem 3 of \cite{Barrientos2012-bx}. 
 For a general logistic-beta dependent Dirichlet process where atoms and weights both depend on $x$, the full weak support condition is satisfied as long as the atom process can be represented with a collection of copulas with positive density w.r.t. Lebesgue measure.

\end{proof}

\subsubsection{Proof of Theorem 4.2 and Theorem 4.3}

\begin{proof} 
We first show Theorem 4.2 and specialize to Theorem 4.1 later. The proof is an extended version of the Theorem 1 of \citet{Dunson2008-ds}. Denote $V_h = \sigma\{\eta_h(x)\}$ and $V_h' = \sigma\{\eta_h(x')\}$, which marginally follow $V_h\sim \mathrm{Beta}(1,b)$ and $V_h'\sim \mathrm{Beta}(1,b)$. Since the dependent Dirichlet process prior implies $G_x$ marginally follows a Dirichlet process with concentration parameter $b$, we have $E\{G_x(B)\} = G^0(B)$ and $\var\{G_x(B)\} = G^0(B)\{1 - G^0(B)\}/(1 + b)$ for any $x$ \citep{Muller2015-rt}. In addition to the notation $\mu(x,x') = E[\sigma\{\eta(x)\} \sigma\{\eta(x')\}]$ where $\eta(x)\sim \lbp(1,b,\calR)$, denote $\mu(x) = E[\sigma\{\eta(x)\}]$ which is simply $\mu(x) = 1/(1+b)$ for any $x$. Recall the notation $G^0_{x,x'}(B) = \pr(\theta_h(x)\in B, \theta_h(x')\in B)$, which does not depends on $h$ since atom processes are i.i.d. across $h$.

The mixed moment is 
\begin{align*}
  & E\{G_{x}(B)G_{x'}(B)\} \\
  =\,\, &  E\left[ \sum_{h=1}^\infty V_h\left\{\prod_{l=1}^{h-1}(1-V_l)\right\}\delta_{\theta_h(x)}(B) \times \sum_{h=1}^\infty V_h'\left\{\prod_{l=1}^{h-1}(1-V_l')\right\}\delta_{\theta_h(x')}(B) \right]\\
  =\,\, &G_{x,x'}^0(B) \underbrace{E\left[\sum_{h=1}^\infty  V_hV_h'\prod_{l=1}^{h-1}(1-V_l)(1-V_l')\right]}_{(*)} \\
  & + G^0(B)^2 \underbrace{E\left[ \sum_{h\ge 1, k \ge 1, h\neq k} V_hV_k'\prod_{l=1}^{h-1}(1-V_l)\prod_{m=1}^{k-1}(1-V'_m) \right]}_{(**)} 
\end{align*}
where equality follows from atom processes and weight processes being independent. Since $(V_h,V_h')$ is independent of $(V_k,V_k')$ for any $h\neq k$, the term $(*)$ becomes
\begin{align*}
(*) &= \sum_{h=1}^\infty  E(V_hV_h')\prod_{l=1}^{h-1}E\{(1-V_l)(1-V_l')\} \\
&= \sum_{h=1}^\infty \mu(x,x')\{1-\mu(x)-\mu(x')+\mu(x,x')\}^{h-1}\\
&= \frac{\mu(x,x')}{\mu(x)+\mu(x')-\mu(x,x')} = \frac{(1+b)}{2/\mu(x,x')-(1+b)}
\end{align*}
The term $(**)$ becomes, since $(V_h,V_h')$ is independent of $(V_k,V_k')$ for any $h\neq k$,
\begin{align*}
 (**) &= E\sum_{h=1}^\infty \sum_{k=1}^{h-1} V_h \left[\prod_{m=1}^{k-1}(1-V_m)(1-V_m')\right](V_k'-V_kV_k')\prod_{s=k+1}^{h-1}(1-V_s)\\
 &+ E\sum_{h=1}^\infty \sum_{k=h+1}^{\infty} V_k'\left[\prod_{l=1}^{h-1}(1-V_l)(1-V_l')\right](V_h-V_hV_h')\prod_{s=h+1}^{k-1}(1-V_s')\\
 &= \sum_{h=1}^\infty \sum_{k=1}^{h-1} \mu(x) \{1-\mu(x)-\mu(x')+\mu(x,x')\}^{k-1}\{\mu(x')-\mu(x,x')\}\{1-\mu(x)\}^{h-k-1}\\
 &+\sum_{h=1}^\infty \sum_{k=h+1}^{\infty} \mu(x')\{1-\mu(x)-\mu(x')+\mu(x,x')\}^{h-1}\{\mu(x)-\mu(x,x')\}\{1-\mu(x')\}^{k-h-1}\\
 &= \sum_{k=1}^\infty \sum_{h=k+1}^{\infty} \mu(x) \{1-\mu(x)-\mu(x')+\mu(x,x')\}^{k-1}\{\mu(x')-\mu(x,x')\}\{1-\mu(x)\}^{h-k-1}\\
 &+\sum_{h=1}^\infty \sum_{k=h+1}^{\infty} \mu(x')\{1-\mu(x)-\mu(x')+\mu(x,x')\}^{h-1}\{\mu(x)-\mu(x,x')\}\{1-\mu(x')\}^{k-h-1}\\
  &= \sum_{k=1}^\infty \{1-\mu(x)-\mu(x')+\mu(x,x')\}^{k-1}\{\mu(x')-\mu(x,x')\}\\
 &+\sum_{h=1}^\infty \{1-\mu(x)-\mu(x')+\mu(x,x')\}^{h-1}\{\mu(x)-\mu(x,x')\}\\
 &= \frac{\mu(x) +\mu(x') -2\mu(x,x')}{\mu(x) + \mu(x') -\mu(x,x')}
\end{align*}
where reordering is justified as the series is absolutely convergent. Combining, 
\[
E\{G_x(B)G_{x'}(B)\} = \frac{G_{x,x'}^0(B)\mu(x,x')}{\mu(x) + \mu(x') -\mu(x,x')} + G^0(B)^2\frac{\mu(x) +\mu(x') -2\mu(x,x')}{\mu(x) + \mu(x') -\mu(x,x')}
\]
and thus, using $E\{G_x(B)\} = E\{G_{x'}(B)\} = G^0(B)$ and $\mu(x) = \mu(x') = 1/(1+b)$,
\[
\cov\{G_x(B),G_{x'}(B)\} = \frac{\{G_{x,x'}^0(B) -G^0(B)^2\}\mu(x,x')}{2/(1+b) -\mu(x,x')}
\]
and using $\var\{G_x(B)\} = G^0(B)(1 - G^0(B))/(1+b)$,
\[
\corr\{G_x(B),G_{x'}(B)\} = \frac{G_{x,x'}^0(B) - G^0(B)^2}{G^0(B) - G^0(B)^2}\frac{(1+b)^2}{2/\mu(x,x')-(1+b)}
\]

Now consider Theorem 4.1, the single-atoms DDP. We have $\theta_h(x) = \theta_h(x')$ for any $x,x'$ and $h$ and thus $G_{x,x'}^0(B) = \pr(\theta_h(x)\in B, \theta_h(x')\in B) = G^0(B)$, so that first term cancels out and correlation does not depend on the evaluation set $B$, which gives 
\[
\corr\{G_x(B),G_{x'}(B)\} = {(1+b)^2}/\left\{2\mu(x,x')^{-1}-(1+b)\right\}.
\]
Since atoms are common in $G_x$ and $G_{x'}$, the tie probability is the same as the probability of picking the same label, corresponding to the term $(*)$ in the previous display. Thus, for $\vartheta\sim G_x$, $\vartheta'\sim G_{x'}$, we have
\[
\pr(\vartheta = \vartheta'\mid G_x,G_{x'}) ={(1+b)}/\left\{2\mu(x,x')^{-1}-(1+b)\right\}
\]
which completes the proof.
\end{proof}

\subsubsection{Proof of Theorem 4.4}

\begin{proof}
    If $\calR(x,x')$ is continuous at every point such that $x=x'$, the covariance function $\cov\{\eta(x),\eta(x')\}$ is a linear function of $\calR(x,x')$ (Proposition~\ref{prop:mlbmoments}) and thus continuous at every point $x=x'$, i.e. the logistic-beta process is mean square continuous \citep[][p.26]{Adler1981-at}. By \citet{Wise1977-bu}, its logistic transformation $V(\cdot) = \sigma\{\eta(\cdot)\}$ is also mean square continuous, which implies that $E\{V(x)V(x')\}\to E\{V(x')^2\}$ as $x\to x'$. Since $V(\cdot)$ has $\mathrm{Beta}(1,b)$ marginal distribution, $E\{V(x)^2\} = 2/\{(b+1)(b+2)\}$ for any $x\in\calX$. Therefore, $\corr\{G_x(B),G_{x'}(B)\}\to (1+b)^2/\{(b+1)(b+2)-(1+b)\} = 1$ as $x\to x'$.
\end{proof}

\subsubsection{Details on related models}

Next, we present the prior correlation structure under other various dependent Dirichlet process models. 
In addition to the logistic-beta dependent Dirichlet process (M1), we provide a derivation of correlation lower bounds under three different dependent Dirichlet process models: (M2) squared \textsc{AR}(1) process-based dependent Dirichlet process \citep{DeYoreo2018-zs}, (M3) time-series dependent Dirichlet process \citep{Nieto-Barajas2012-fr}, and (M4) copula-based dependent Dirichlet process. Similar to the logistic-beta dependent Dirichlet process, these models induce dependence through a stochastic process on stick-breaking ratios that have beta marginals but with different constructions. 

First, from \citet[][Appendix A.1.]{DeYoreo2018-zs}, the $\corr\{V(x),V(x')\}$ under the squared \textsc{ar}(1) process-based dependent Dirichlet process has the greatest lower bound of $b^{1/2}(b+1)(b+2)^{1/2} - b(b+2)$ when autoregressive parameter converges to 0. This implies $E\{V(x)V(x')\}$ has the infimum of $\{b^{3/2}+(1-b)(b+2)^{1/2}\}/\{(b+1)(b+2)^{1/2}\}$, using $E\{V(x)\} = 1/(b+1)$ and $\var\{V(x)\} = b/((b+1)^2(b+2))$ for any $x$ under beta distributed $V(x)$. This leads to the greatest lower bound of correlation between random probability measures $(1+b)/([2(b+2)^{1/2}/\{b^{3/2}+(1-b)(b+2)^{1/2}\}]- 1)$, which converges to $0.75$ as $b\to \infty$.

Next, the infimum of $\corr\{V(x),V(x')\}$ under the time-series dependent Dirichlet process \citep{Nieto-Barajas2012-fr} is when $V(x)$ and $V(x')$ are independent so that $\inf \corr\{V(x),V(x')\} = 0$. Thus, $E\{V(x)V(x')\}$ has the infimum of $1/(b+1)^2$, which leads to $\inf \corr\{G_{x}(B), G_{x'}(B)\} = (1+b)/(1+2b)$. This fact is also noted in \citet{Nieto-Barajas2012-fr} as a special case when stick-breaking ratios are independent, and the same fact also applies to the dependent Dirichlet process proposed by \citet{Taddy2010-ax} when the autoregressive parameter converges to 0.  

Finally, we analyze a copula-based single-atoms dependent Dirichlet process, with beta stick-breaking ratios constructed through a copula that induces a stochastic process with $\mathrm{Beta}(1,b)$ marginals. Finding infimum of $\corr\{G_{x}(B), G_{x'}(B)\}$ is equivalent to finding the infimum of correlation between stick-breaking ratios $\corr\{V(x), V(x')\}$, which is achieved at the Fr\'echet lower bound copula \citep{Embrechts2002-mr}. In other words, minimum correlation is attained when $V(x)$ and $V(x')$ satisfy $V(x') = F_B^{-1}(1 - F_B(V(x)))$ (counter-monotonic), where $F_B$ is the c.d.f. of $\mathrm{Beta}(1,b)$. 

Although the analytic form of $\corr\{V(x), V(x')\}$ under common atoms logistic-beta dependent Dirichlet process and copula-based dependent Dirichlet process are not readily available due to the nonlinear transformations involved, those can be estimated with Monte Carlo simulation where we used 5,000 simulated datasets of size 1,000 to calculate the correlation estimates.

\begin{figure}
\centering
\includegraphics[width=0.63\textwidth]{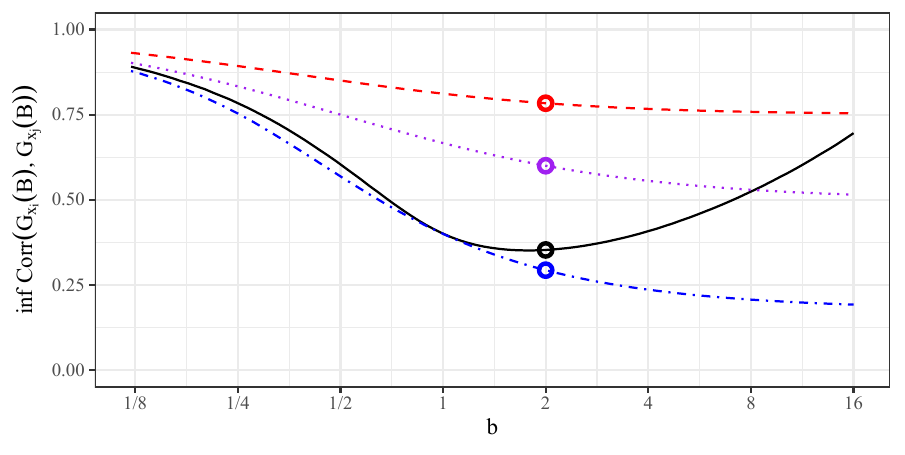}
\includegraphics[width=0.33\textwidth]{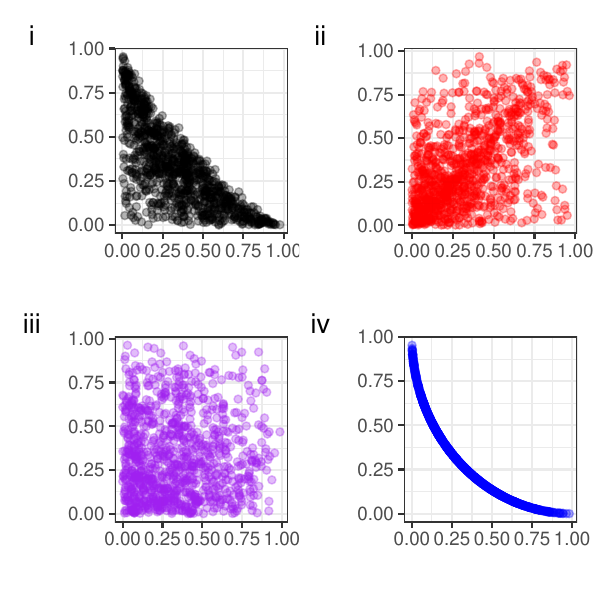}
    \caption{Illustration of different correlation lower bounds under four different single-atoms dependent Dirichlet process. (Left) Infimum of correlation between random probability measures under models M1 (solid, black), M2 (dashed, red), M3 (dotted, purple), and M4 (dot-dashed, blue). (Right) Scatterplot of bivariate betas with $\mathrm{Beta}(1,2)$ marginals, corresponding to circles of the left panel.}
    \label{fig:ddp_betas_scatterplot}
\end{figure}

Figure~\ref{fig:ddp_betas_scatterplot} illustrates different correlation lower bounds under four different single-atoms dependent Dirichlet processes with concentration parameter $b=2$. The difference in the induced dependence structure is clearly seen from scatterplots, where 1,000 random samples are drawn from a corresponding bivariate beta distribution that induces minimum correlation from models M1--M4. The Fr\'echet lower bound clearly shows counter-monotonicity between two beta random variables, which induces minimum possible correlation. Transformation of logistic-beta also can induce a negative correlation between beta random variables, while others cannot. In summary, logistic-beta dependent Dirichlet processes can capture a wide range of dependence when $b$ is small to moderate. 

In addition, Figure~\ref{fig:corr_dist} illustrates how various correlations change as a function of distance between $x$ and $x'$ under Mat\'ern correlation kernel with different smoothness parameters $\nu\in\{0.5, 1.5, 2.5\}$. This includes correlation between logistic-beta realizations $\corr\{\eta(x), \eta(x')\}$, correlation between their logistic transformations $\corr\{V(x), V(x')\}$ which are used as stick-breaking ratios, and correlation between random probability measures under single-atoms LB-DDP. In general, the correlation decay pattern of $\corr\{V(x), V(x')\}$ and $\corr\{G_x(B), G_{x'}(B)\}$ are similar to the decay pattern of correlation between logistic-beta realizations $\corr\{\eta(x), \eta(x')\}$, but in different scales depending on the choice of beta parameters $(a,b)$.

\begin{figure}
    \centering
    \includegraphics[width=\linewidth]{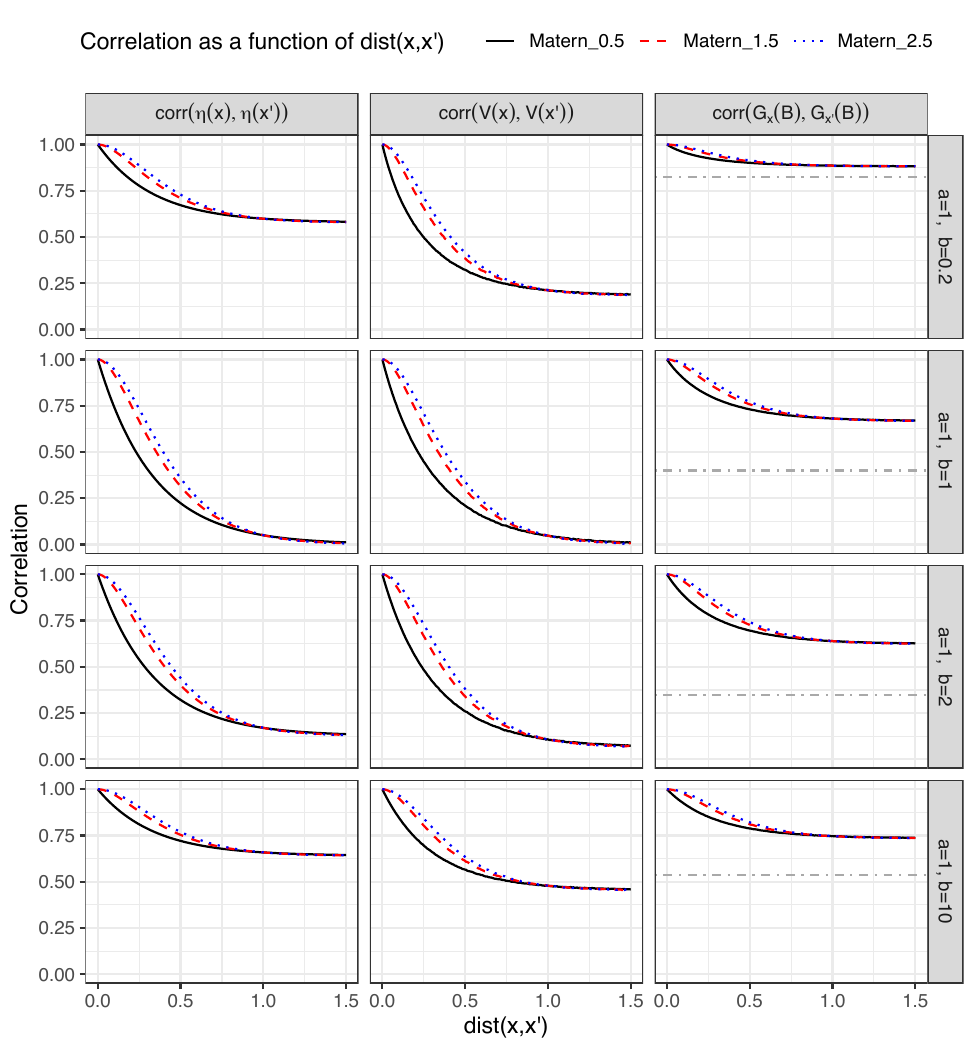}
    \caption{Correlation as a function of $d = \|x-x'\|$ under Mat\'ern kernel $\calR_M(x,x') = 2^{1-\nu}\Gamma(\nu)^{-1}(d/\varrho)^\nu K_\nu(d/\varrho)$ with $(\varrho,\nu) \in\{ (1/3, 1/2), (4/19, 3/2), (10/59, 5/2)\}$. Each row corresponds to parameters $(a,b)\in \{(1,1/5), (1,1),(1,2), (1,10)\}$. (Left) $\corr\{\eta(x),\eta(x')\}$, the correlation between logistic-beta process realizations. When $a=b$, we have $\corr\{\eta(x),\eta(x')\} = \calR_M(x,x')$; see Proposition~\ref{prop:mlbmoments}. (Center) $\corr\{V(x),V(x')\}$ with $V(x) = \sigma\{\eta(x)\}$, the correlation between stick-breaking ratios. (Right) $\corr\{G_x(B),G_{x'}(B)\}$, the correlation between random probability measures under the single-atoms logistic-beta dependent Dirichlet process. Dot-dashed horizontal lines (gray) correspond to $\inf\corr\{G_x(B),G_{x'}(B)\}$ when $\calR(x,x') = -1$. The range parameters $\varrho$ are chosen such that when $a=b$, $\corr\{\eta(x),\eta(x')\}$ at distance 1 becomes approximately 0.05 (effective range is 1).}
    \label{fig:corr_dist}
\end{figure}

\section{Additional posterior computation strategies}
\label{appendix:bnew}

\subsection{Correlation kernels with low-rank structures}
\label{appendix:b1}
Thanks to the normal variance-mean mixture construction of the logistic-beta process, many existing scalable Gaussian process methods that preserve marginal variances can be easily applied to the logistic-beta process. In addition to the correlation kernel constructed from normalized feature maps,  
another example of correlation kernel with low-rank structures is the modified predictive process \citep{Finley2009-bc}. 
From a parent kernel $\calR$ with $q$ knot locations $u_1, \dots,u_q \in\scrX$, a new correlation kernel $\tilde\calR$ with a low-rank structure can be defined as 
\begin{equation}
    \tilde{\calR}(x,x') = \bfr(x)^\T \bfR_{uu}^{-1}\bfr(x') + \mathds{1}(x=x') (1- \bfr(x)^\T \bfR_{uu}^{-1}\bfr(x'))
    \label{eq:R_mpp}
\end{equation}
where $\bfr(x) = [\calR(x,u_k)]_{k=1}^q$ is a length $q$ vector as a function of $x$ and a matrix $\bfR_{uu} = [\calR(u_k,u_{k'})]_{k,k'=1}^q$ is a fixed $q\times q$ knot correlation matrix; see also \citet{Quinonero-Candela2005-sz}. 
Thanks to the Woodbury matrix identity and matrix determinant lemma, posterior computation with a low-rank correlation kernel can be carried out in a highly efficient manner. This includes fast evaluation of $n$-dimensional multivariate normal density in Step 2 of Algorithm~\ref{alg:lb-bernoulli}, as the covariance matrix has a ``low-rank plus diagonal'' structure. 
In addition to the modified predictive process \eqref{eq:R_mpp}, other options include full-scale approximation \citep{Sang2012-vj} or multi-resolution approximation \citep{Katzfuss2017-lo} among many others.

Since marginal variances of a new kernel must remain 1 to be a valid correlation kernel, the predictive process \citep{Banerjee2008-iw} or Vecchia approximations \citep{Vecchia1988-gf} cannot be directly applied without suitable adjustments. Exploring other options for scalable logistic-beta processes is an interesting future direction. 

\subsection{Adaptive P\'olya proposal and particle Gibbs sampler}
\label{appendix:b2}

We describe the details of the adaptive P\'olya proposal scheme, on choosing the proposal distribution parameters $(a',b')$ and sampling $\lambda$ from a distribution that involves P\'olya prior density $\pipo(\lambda; a,b)$, such as expression \eqref{eq:polyaupdate} in Step 2 of Algorithm~\ref{alg:lb-bernoulli}. 
There are many ways to incorporate previous samples' information in the choice of $(a',b')$. An example is the moment matching method with a running average of $\lambda$, denoted as $\bar{\lambda}$, that is recursively updated as $\bar{\lambda} \leftarrow \bar{\lambda}(m-1)/m + \lambda/m$ in $m$th iteration of the algorithm, leading to adaptive Metropolis-within-Gibbs~\citep{Roberts2009-xo}. Assuming $a'\le b'$ without loss of generality due to the symmetry of the P\'olya distribution in its arguments, which implies $a'\le c/2$, the $\mathrm{Polya}(a',c-a')$ has a mean $h(a') = 2\{\psi(a') - \psi(c-a')\}/(2a'-c)
$ (or $h(a') = 2\psi'(c/2)$ if $a'=c/2$) that is decreasing on $a'\in (0,c/2]$. Thus, we set $a' = (a+b)/2$ if $\bar{\lambda}$ is less than or equal to $2\psi'((a+b)/2)$, and otherwise we set $a'$ to the solution of $h(x) = \bar{\lambda}$ which can be easily obtained with numerical optimization algorithms.

In Step 2 of Algorithm~\ref{alg:lb-bernoulli}, it is possible to use a particle Gibbs sampler \citep{Andrieu2010-tm} instead of Metropolis-Hastings to potentially improve the mixing behavior of $\lambda$. 
The adaptive P\'olya proposal scheme can be used to choose the particle proposal. In Step 2 of Algorithm~\ref{alg:lb-bernoulli}, instead of drawing a single candidate, the particle Gibbs sampler first draws $N$ candidates $\lambda^{(1)},\dots, \lambda^{(N)}\sim \mathrm{Polya}(a',b')$ and sets the current state as the last candidate $\lambda^{(N+1)} = \lambda$, where $(a',b')$ are chosen based on adaptive P\'olya proposal scheme. Then, unnormalized importance weights $w_k = \exp\{-\lambda^{(k)}(ab-a'b')/2\}\calL(\lambda^{(k)})$ for $k=1,\dots,N+1$ are calculated, and $\lambda$ is sampled among $N+1$ candidates according to probabilities proportional to $w_k$'s. This particle Gibbs sampler also avoids the evaluation of the P\'olya density and potentially leads to better mixing with large $N$ but at an increased cost compared to Metropolis-Hastings. 

\subsection{Prior for beta parameters (a,b)}

We also consider a further hierarchical setting where beta parameters $(a,b)$ are no longer fixed but unknown. For simplicity, consider the nonparametric binary regression model where some proper prior $\pi(a,b)$ is placed on $(a,b)$.
Although the inference could be carried out by having an additional step of sampling $(a,b)\mid \lambda$ in Algorithm~\ref{alg:lb-bernoulli} using Metropolis-Hastings, it involves the evaluation of P\'olya densities which can cause numerical issues and can lead to slow mixing of the overall algorithm. 
 
We describe how inference can be still carried out without evaluation of P\'olya densities and jeopardizing the overall mixing behavior. This is done by expanding the block sampling of $(\lambda,\bm\eta)$ in Algorithm~\ref{alg:lb-bernoulli} to the block sampling of $(a,b,\lambda,\bm\eta)$, by sequential updates from $\pi(a,b,\lambda \mid \bm\omega, \bfz)$, and $\pi(\bm\eta \mid a,b,\lambda, \bm\omega, \bfz)$. To update $(a,b,\lambda)$ jointly, one can use particle marginal Metropolis-Hastings algorithm \citep{Andrieu2010-tm}. With $N$ number of particles, it proceeds as: (A) draw a new candidate $(a^\star,b^\star)\sim \tilde{q}(a^\star,b^\star\mid a,b)$ with some proposal distribution $\tilde{q}$, (B) draw new particles $\lambda^{\star(1)}, \dots,\lambda^{\star(N)}\iidsim \mathrm{Polya}(a^\star,b^\star)$, (C) Draw a candidate $\lambda^\star$ among 
$\lambda^{\star(1)}, \dots,\lambda^{\star(N)}$ with probability proportional to $\calL(\lambda^{\star(\ell)})$, $\ell = 1,\dots,N$, and (D) Accept $(a^\star,b^\star,\lambda^{\star})$ and set of particles $\{\lambda^{\star(\ell)}\}_{\ell=1}^N$ with probability 
\[
\min \left\{1, \frac{\pi(a^\star,b^\star)\sum_{\ell = 1}^N\calL(\lambda^{\star(\ell)})}{\pi(a,b)\sum_{\ell = 1}^N \calL(\lambda^{(\ell)})}\times \frac{\tilde{q}(a,b\mid a^\star,b^\star)}{\tilde{q}(a^\star,b^\star\mid a,b)}\right\},
\]
otherwise keep $(a,b,\lambda)$ and current set of particles $\{\lambda^{(\ell)}\}_{\ell=1}^N$. 
The joint update of $(a,b,\lambda)$ from collapsed conditional $\pi(a,b,\lambda \mid \bm\omega, \bfz)$, where $\bm\eta$ is marginalized out, is made possible due to normal-normal conjugacy. Updating $\bm\eta$ is the same as Step 3 of Algorithm~\ref{alg:lb-bernoulli}. 

\color{black}
\subsection{Challenges for developing rejection samplers involving P\'olya densities}
\label{appendix:b3}

Motivated by the broad success of P\'olya-Gamma data augmentation algorithms in posterior computation for logistic-type models \citep{Polson2013-gb}, we discuss similarities and differences between P\'olya and P\'olya-Gamma random variables, particularly in the context of rejection samplers.
The P\'olya-Gamma random variable, denoted as $\omega\sim \textsc{pg}(v,0)$, is defined as an infinite convolution of gamma random variables $ (2\pi^2)^{-1}\sum_{k=0}^\infty g_k/(k+1/2)^2, g_k\iidsim \mathrm{Ga}(v,1)$, and its two-parameter version $\textsc{pg}(v,c)$ arises from an exponential tilting of the $\textsc{pg}(v,0)$ density $\pipg(\omega; v,c)\propto \pipg(\omega; v,0)\exp(-c^2\omega/2)$. 
Compared to P\'olya random variables, there is an intersection between P\'olya-Gamma where $\mathrm{Polya}(a=1/2,b=1/2)$ is equal in distribution to $\textsc{pg}(v=1, c=0)$ scaled by $4\pi^2$, but they generally don't have overlap for other parameter choices.

The P\'olya-Gamma data augmentation scheme facilitates conditionally conjugate updating of the logistic-beta latent parameter $\eta$ in Algorithm~\ref{alg:lb-bernoulli} and \ref{alg:lb-ddp}, and the development of an efficient rejection sampler for $\textsc{pg}(v,c)$ played an important role in its success in many statistical models. 
Similarly, the rejection sampler for Kolmogorov--Smirnov outlined in \citet[][\S 5.6]{Devroye1986-gj} has been utilized in the data augmentation scheme of \citet{Holmes2006-np}. 
Building rejection samplers for $\textsc{pg}(1,c)$ and Kolmogorov--Smirnov crucially rely on two different alternating sum representations of Jacobi theta functions \citep{Devroye2009-fn}, which are essential to meet the monotonicity condition of the rejection sampler based on alternating series \citep{Devroye1986-gj}. To the best of our knowledge, we are only aware of such dual representations for two highly special cases of $\mathrm{Polya}(1/2,1/2)$ and $\mathrm{Polya}(1,1)$ that correspond to P\'olya-Gamma and Kolmogorov--Smirnov random variables, respectively, but not for general choices of $(a,b)$ of P\'olya distribution; see Table 1 of \citet{Biane2001-nz} and Theorems 6.1 and 6.3 of \citet{Salminen2024-sr}. Another major hurdle in developing a rejection sampler is that the multivariate normal density in expression \eqref{eq:polyaupdate} does not have a simple form as a function of $\lambda$, in contrast to a simple exponential tilting in the case of \citet{Polson2013-gb}. Solving the above problems is an interesting direction for future research.

\section{Details of simulation analysis and application study}
\label{appendix:c}
\subsection{Nonparametric binary regression simulation studies}
\label{appendix:c1}
We first provide a detailed description of the quantities we used to evaluate the simulation results. The effective sample size is calculated with an autoregressive approach implemented in R package \texttt{coda} \citep{Plummer2006-ii}. Similarly, the multivariate effective sample size is calculated with R package \texttt{mcmcse} \citep{Flegal2017-jl}. The root mean squared error and mean absolute error are calculated based on a posterior mean estimate. The mean continuous ranked probability score is calculated by $n^{-1}\sum_{i=1}^n S(\hat{F}_{\theta_i}, \theta_i^*)$, where $S(F,\theta)$ is a continuous ranked probability score (lower the better) and $\hat{F}_{\theta_i}$ is an empirical distribution of $\theta_i$ obtained from posterior samples of $\theta_i$. 
All algorithms are executed on an Intel(R) Xeon(R) Gold 6132 CPU with a 96GB memory environment. The algorithm running times are defined as the difference between wall-clock time (in seconds) from the start to the finish of Markov chain Monte Carlo algorithm, comprising a total of 2,000 iterations. This excludes pre-computation time, such as the time required to compile \texttt{Stan} programs.

The range parameter of Mat\'ern kernel $\varrho$ is assumed to be unknown, and those are learned from data when fitting both latent logistic-beta process and Gaussian copula models. For the latent logistic-beta process model, we introduce a discrete uniform prior distribution on the set $\{0.01,0.02,\dots,0.5\}$ (with probability 0.02 each), and for the Gaussian copula model, we use continuous uniform prior $\varrho\sim \mathrm{Unif}(0.01, 0.5)$ as \texttt{Stan} can only handle continuous priors. The smoothness parameter is fixed as $\nu = 1.5$ for both models and the distance calculations $\|x_i - x_j\|$ in Mat\'ern kernels are based on Euclidean distance. 

\begin{table}[h]
\caption{Nonparametric binary regression simulation results comparing computational benefits of different strategies. Entries shown are average over 100 replicates, with Monte Carlo standard error shown in parentheses. ESS, effective sample size of $\lambda$; ESS/sec, ESS divided by running time in seconds; Acc. rate, acceptance rate of $\lambda$.
}
\footnotesize
\centering
\begin{tabular}{ccc c c c}
    \toprule
       Data generation & \multicolumn{2}{c}{Algorithm settings} & ESS & ESS/sec & Acc. rate (\%)\\
    \midrule
    \multirow{6}{*}{\makecell{LBP, \\ $\varrho = 0.1$}} & \multirow{2}{*}{\makecell{Non-blocked,\\ Indep. M-H}} & Adapted & 7.89 (0.35) & 0.14 (0.01)& 13.60 (0.26) \\
      & & Non-adapted & 7.13 (0.37) & 0.13 (0.01) & 12.44 (0.31) \\
    \cmidrule(lr){2-6}
      & \multirow{2}{*}{\makecell{Blocked,\\Indep. M-H}} & Adapted & 245.08 (12.86) & 3.35 (0.18) & 54.28 (1.03) \\
     & & Non-adapted & 196.35 (11.28)  & 2.69 (0.15) & 49.39 (1.39) \\
      \cmidrule(lr){2-6}
     &\multirow{2}{*}{\makecell{Blocked,\\Particle Gibbs}} & Adapted & 358.10 (18.57) & 2.33 (0.12)& 84.50 (0.42) \\
     & & Non-adapted & 352.73 (18.83) & 2.34 (0.13) & 81.35 (1.02) \\
     \midrule
    \multirow{6}{*}{\makecell{LBP, \\ $\varrho = 0.2$}} & \multirow{2}{*}{\makecell{Non-blocked,\\ Indep. M-H}} & Adapted & 7.31 (0.32) & 0.11 (0.00) & 13.42 (0.22) \\
     & & Non-adapted & 6.58 (0.32) & 0.10 (0.00) & 12.48 (0.31) \\
      \cmidrule(lr){2-6}
     & \multirow{2}{*}{\makecell{Blocked,\\Indep. M-H}}  & Adapted & 257.01 (16.32) & 2.89 (0.18) &  62.26 (1.12) \\
     & & Non-Adapted &  247.83 (16.53) &  2.82 (0.19) & 57.60 (1.47)\\
      \cmidrule(lr){2-6}
     &\multirow{2}{*}{\makecell{Blocked,\\Particle Gibbs}}  & Adapted & 360.48 (21.29) & 1.96 (0.12) & 86.11 (0.40) \\
     & & Non-adapted & 366.55 (24.75) & 1.99 (0.13) & 84.08 (0.64) \\
     \midrule
    \multirow{6}{*}{\makecell{LBP, \\ $\varrho = 0.4$}} & \multirow{2}{*}{\makecell{Non-blocked,\\ Indep. M-H}} & Adapted & 6.40 (0.29) & 0.09 (0.00) & 13.01 (0.22) \\
     & & Non-adapted & 6.56 (0.32) & 0.10 (0.00) & 12.88 (0.31) \\
     \cmidrule(lr){2-6}
     & \multirow{2}{*}{\makecell{Blocked,\\Indep. M-H}} & Adapted & 368.26 (17.54) & 4.12 (0.20) & 66.12 (0.98) \\
     & & Non-adapted & 328.72 (19.00) & 3.67 (0.21) & 61.45 (1.32) \\
      \cmidrule(lr){2-6}
     &\multirow{2}{*}{\makecell{Blocked,\\Particle Gibbs}} & Adapted & 468.76 (25.98) & 2.52 (0.14) & 87.09 (0.29) \\
     & & Non-adapted & 445.09 (25.04) & 2.40 (0.14) & 85.18 (0.58) \\
    \bottomrule
  \end{tabular}

  \label{table:sim1-add}
\end{table}

We first present additional simulation results in Table~\ref{table:sim1-add}, similar to those in Table~\ref{table:sim1}, to analyze and compare the computational strategies involved. 
First, we compare with a naive version of the Gibbs sampler where Step 2 of Algorithm \ref{alg:lb-bernoulli} is replaced with sampling $\lambda$ from full conditional $\pi(\lambda\mid \bm\omega, \bm\eta, \bfz )$, referred to as ``Non-blocked''. 
Also, instead of independent Metropolis-Hastings, we employ a particle Gibbs sampler for sampling $\lambda$ with $N=10$, combined with a blocked Gibbs sampler. 

Table~\ref{table:sim1-add} reports the mixing behavior and sampling efficiency of $\lambda$ of algorithms under different computational strategies. Here, the acceptance rate for the particle Gibbs sampler is defined as the number of iterations when one of the new $N$ candidates is accepted as a new sample, divided by the total number of iterations. 
The blocking scheme has a substantial impact on the mixing behavior, leading to an effective sample size increase of up to 50 times. 
Also, it shows that employing a particle Gibbs sampler improves the effective sample size of $\lambda$, but at the cost of decreased sampling efficiency measured by effective sample size divided by seconds. Utilizing the adaptive P\'olya proposal with a particle Gibbs sampler shows less significant impact compared to the adaptive P\'olya proposal with independent Metropolis-Hastings, as both aim to improve the mixing behavior of $\lambda$. These results highlight the benefits of the blocking strategy, which is made possible due to the normal mixture representation of logistic-beta. It also suggests that the particle Gibbs sampler could enhance the mixing behavior, which is particularly beneficial when the acceptance rate from independent Metropolis-Hastings is low. While this approach may improve mixing, it does not guarantee enhanced sampling efficiency.

\begin{table}
\caption{Nonparametric binary regression simulation results comparing latent logistic-beta process (LBP) and Gaussian copula models (GauCop). 
  Entries are averaged over 100 replicates, with Monte Carlo standard errors in parentheses. RMSE, root mean squared error; CRPS, continuous ranked probability score; mESS/sec, multivariate effective sample size of latent success probabilities divided by running time in seconds.}
  \footnotesize
  \centering
  \begin{tabular}{ccc cc cc}
    \toprule
   \multirow{2}{*}{\makecell{Data \\ generation}} & \multirow{2}{*}{Model} & \multicolumn{2}{c}{RMSE $\times 100$} & \multicolumn{2}{c}{mean CRPS $\times 100$} & \multirow{2}{*}{mESS/sec}  \\
   \cmidrule(lr){3-4} \cmidrule(lr){5-6}
   &  & training & test & training & test & \\
    \midrule
    \multirow{2}{*}{\makecell{LBP, \\ $\varrho = 0.1$}} & LBP & 11.59 (0.15) & 12.17 (0.20) & 6.31 (0.10) & 6.62 (0.12) & 21.27 (0.24)\\
     &GauCop & 11.66 (0.15) & 12.19 (0.20) & 6.35 (0.09) &6.65 (0.12) & 0.48 (0.01)\\
    \midrule
    \multirow{2}{*}{\makecell{LBP, \\ $\varrho = 0.2$}} & LBP &8.54 (0.18) & 8.73 (0.19) & 4.67 (0.11) & 4.77 (0.11) & 17.87 (0.16) \\
     &GauCop &  8.59 (0.17) &  8.76 (0.18) & 4.70 (0.10) & 4.79 (0.11) & 0.44 (0.01)\\
  \midrule
    \multirow{2}{*}{\makecell{LBP, \\ $\varrho = 0.4$}} & LBP & 6.12 (0.19) & 6.16 (0.19) & 3.41 (0.11) & 3.43 (0.11) & 17.48 (0.16) \\
     &GauCop &  6.15 (0.17) & 6.19 (0.18) & 3.43 (0.10) & 3.44 (0.11)  & 0.47 (0.01)\\
    \bottomrule
  \end{tabular}
  
  \label{table:sim2-lbp}
\end{table}

Table~\ref{table:sim2-lbp} provides simulation results that compare latent logistic-beta process and Gaussian copula models that are similar to Table~\ref{table:sim2} but with data generation based on the latent logistic-beta process model. Compared to Table~\ref{table:sim2}, where the Gaussian copula model achieved slightly better predictive performance, Table~\ref{table:sim2-lbp} shows that the latent logistic-beta process model can achieve slightly better predictive performance of of latent success probabilities when the model is correctly specified. In terms of sampling efficiency, the latent logistic-beta process clearly outperforms the Gaussian copula model.

\begin{table}
\caption{Additional simulation results comparing latent logistic-beta process (LBP) and Gaussian copula models (GauCop). Entries shown are average over 100 replicates, with Monte Carlo standard error in parentheses. MAE, mean absolute error of inferred \& predicted success probabilities on training \& test data; mean ESS/sec,
Univariate effective sample sizes of latent success probabilities divided by running time in seconds, averaged across training dataset.}
  \footnotesize
  \centering
  \begin{tabular}{ccc ccc}
    \toprule
   \multirow{2}{*}{\makecell{Data\\generation}} & \multirow{2}{*}{Model} & \multicolumn{2}{c}{MAE $\times 100$} & \multirow{2}{*}{mean ESS/sec} & \multirow{2}{*}{time (sec)}  \\
   \cmidrule(lr){3-4} 
   &  & training & test & & \\
    \midrule
    \multirow{2}{*}{\makecell{GauCop, \\ $\varrho = 0.1$}} & LBP & 9.34 (0.13) & 9.67 (0.15) & 9.44 (0.16) & 73.44 (0.48)\\
     &GauCop & 9.24 (0.12) & 9.60 (0.14) & 0.49 (0.01) & 2135.52 (44.88)\\
    \midrule
    \multirow{2}{*}{\makecell{GauCop, \\ $\varrho = 0.2$}} & LBP &6.79 (0.14) & 6.90 (0.14) & 7.36 (0.15) & 89.86 (0.38) \\
     &GauCop &  6.74 (0.14) &  6.86 (0.15) & 0.41 (0.01) & 2882.76 (255.82)\\
    \midrule
    \multirow{2}{*}{\makecell{GauCop, \\ $\varrho = 0.4$}} & LBP & 4.82 (0.14) & 4.84 (0.14) & 7.45 (0.21) & 89.98 (0.34) \\
     &GauCop &  4.80 (0.14) & 4.83 (0.14) & 0.47 (0.01) & 2320.49 (84.65)\\
    \midrule
    \multirow{2}{*}{\makecell{LBP, \\ $\varrho = 0.1$}} & LBP & 9.03 (0.14) & 9.54 (0.17) & 9.31 (0.20) & 73.50 (0.46)\\
     &GauCop & 9.10 (0.13) & 9.58 (0.17) & 0.48 (0.01) & 2182.61 (47.37)\\
    \midrule
    \multirow{2}{*}{\makecell{LBP, \\ $\varrho = 0.2$}} & LBP & 6.64 (0.15) & 6.78 (0.16) & 7.36 (0.18) & 88.65 (0.48) \\
     &GauCop &  6.67 (0.15) &  6.81 (0.15) & 0.45 (0.01) & 2356.76 (50.61)\\
    \midrule
    \multirow{2}{*}{\makecell{LBP, \\ $\varrho = 0.4$}} & LBP & 4.84 (0.15) & 4.86 (0.16) & 7.53 (0.21) & 89.91 (0.46) \\
     &GauCop &  4.85 (0.14) & 4.88 (0.15) & 0.48 (0.01) & 2305.82 (85.17)\\
    \bottomrule
  \end{tabular}  
  \label{table:sim2-add}
\end{table}

We provide additional summaries in Table~\ref{table:sim2-add} that compare latent logistic-beta process and Gaussian copula models. Similar to Table~\ref{table:sim2}, Table~\ref{table:sim2-add} shows that the in-sample and predictive performance of the latent logistic-beta process model in terms of mean absolute error is comparable to the Gaussian copula model, showing the flexibility of logistic-beta process despite model misspecification. When data are generated from the latent logistic-beta process model, the in-sample and predictive performance of the latent logistic-beta process model is slightly better than the Gaussian copula model. 
We also present sampling efficiency in terms of the average of univariate effective sample size and algorithm running times. Logistic-beta process models have a clear computational advantage, likely due to the 
conditionally conjugate updating scheme with an efficient blocked Gibbs sampling algorithm.

\subsection{Bayesian density regression simulation studies}

First, we provide a detailed description of the regression errors and density errors we used to evaluate the simulation results, where we closely followed \citet{Wade2025-ef}. We consider $n_{\mathrm{test}} = 100$ test covariates $x^*_1,\dots,x^*_{100}$ equally spaced over the domain, $[0,1]$ for scenario A and $[-2,10]$ for scenario B. Also, we consider the grid of responses $\{y_g\}$ of size $500$, equally spaced over $[-1,2]$ for scenario A and $[-1,10]$ for scenario B. The regression errors are calculated as
\[
\text{Regression error} = \left[\frac{1}{n_{\mathrm{test}}}\sum_{i=1}^{n_\mathrm{test}}\{m(y\mid x_i^*)- \hat{m}(y\mid x_i^*)\}^2\right]^{1/2}
\]
where we used the posterior mean estimate for $\hat{m}(y\mid x_i^*)$. Also, the density errors are calculated as 
\[
\text{Density error} =  \frac{1}{n_{\mathrm{test}}}\sum_{i=1}^{n_\mathrm{test}}\sum_{g=1}^{500}|p(y_g\mid x_i^*) - \hat{p}(y_g\mid x_i^*)|(y_{2}-y_1)
\]
where we used the posterior mean estimate for $\hat{p}(y_g\mid x_i^*)$.

We set the upper bound on the number of
components as $H=20$ for all models. 
For the priors on atom processes, we used normal prior on regression coefficients $(\beta_{0h},\beta_{1h})^\T\iidsim N_2(\bm{0}_2, 10^2\bfI_2)$ and gamma prior on precision $\tau_h\iidsim \mathrm{Ga}(1, 1)$ for $h=1,\dots,H$. 
All algorithms are executed on Intel(R) Xeon(R) Gold 6336Y 2.40GHz CPU. For all three models, we run a total of 25,000 Markov chain Monte Carlo iterations, and the first 20,000 samples are discarded as burn-in.

\begin{figure}
    \centering
    \includegraphics[width=\linewidth]{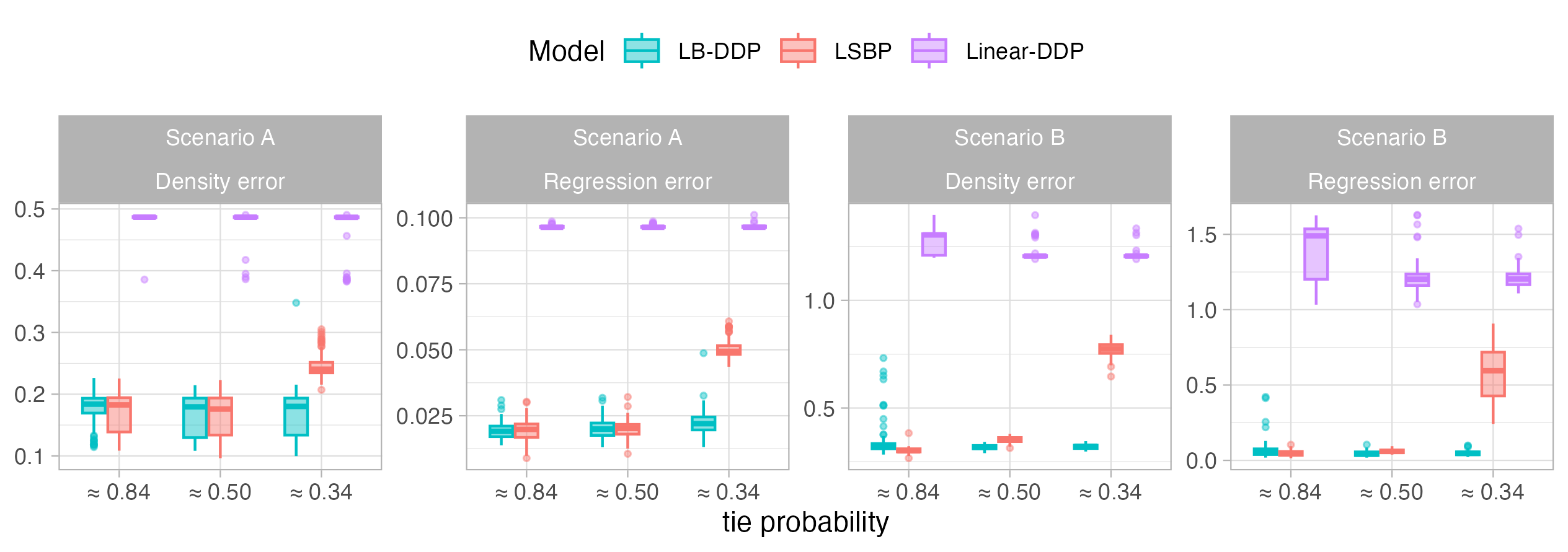}
    \caption{Density regression simulation results under two different data generation scenarios with sample size $n=2000$ and three different hyperparameter settings based on prior tie probability. Density error and regression error quantify the accuracy of conditional density and conditional mean estimates averaged across test covariates, respectively.}
    \label{fig:sim2boxplotn2000}
\end{figure}

Figure~\ref{fig:sim2boxplotn2000} provides density regression simulation results that are similar to Figure~\ref{fig:sim2boxplotn500} but with sample size $n=2000$. Compared to Figure~\ref{fig:sim2boxplotn500}, the performance of the linear DDP mixture model is not improved in both scenarios despite the increased sample size. This highlights the benefits of having covariate-dependent mixture weights, where the linear DDP mixture model is not flexible enough to accommodate the complexities. The LB-DDP and LSBP mixture models show improved performance, but similar to Figure~\ref{fig:sim2boxplotn500}, the LSBP mixture model is also sensitive to the choice of hyperparameter for the setting when the prior tie probability is around 0.34.

\subsection{Pregnancy outcome data analysis}
\label{appendix:c2}

We divide the collaborative perinatal project data available from the \texttt{BNPmix} package into two groups based on maternal smoking status during pregnancy, resulting in data sizes of $n=1023$ (smoking) and $n=1290$ (nonsmoking), respectively. 
We centred and scaled the data before the analysis, and we set the upper bound on the number of components as $H=20$ for both the logistic-beta dependent Dirichlet process and the logit stick-breaking process. We considered semiconjugate priors for mixture component parameters in expression \eqref{eq:lbddpmixture2} as $(\beta_{0h},\beta_{1h})^\T \iidsim N_2(\bm{0}_2,\bfI_2),\tau_h\iidsim \mathrm{Ga}(1,1)$, $h=1,\dots,H$ for both models.

We run Algorithm~\ref{alg:lb-ddp} for logistic-beta dependent Dirichlet process, and Algorithm 1 of \cite{Rigon2021-ir} for logit stick-breaking process to carry out the posterior inference, both implemented with \texttt{R} for time comparison purposes. 
We ran a total of 35,000 Markov chain Monte Carlo iterations for both models, and the first 5,000 samples were discarded as burn-in, executed on an Intel(R) Xeon(R) Gold 6132 CPU with a 96GB memory environment. 
The analysis of trace plots shows satisfactory convergence for both models. 
The running time of Markov chain Monte Carlo samplers under settings 1, 2, and 3 are approximately 5.4, 7.5, and 10.2 minutes for the logistic-beta dependent Dirichlet process and 4.9, 5.6, and 6.2 minutes for the logit stick-breaking process, respectively, for the analysis of smoking group data. This shows the computational tractability of the logistic-beta dependent Dirichlet process model is similar to the logit stick-breaking process model, and we note that substantial time improvements are possible with lower-level programming languages.

\begin{figure}
\centering
    \includegraphics[width=\textwidth]{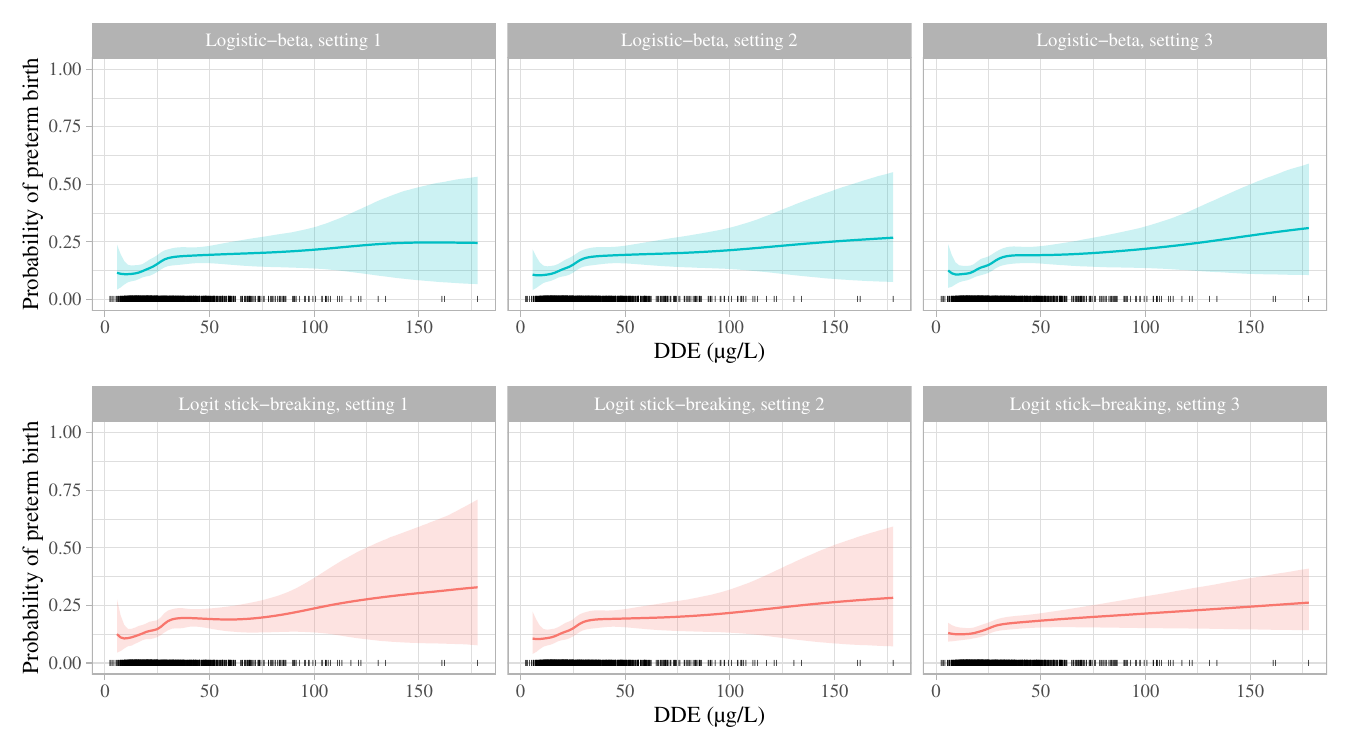}
    \caption{Estimated probability of preterm birth with 95\% credible intervals against DDE exposure level from the logistic-beta dependent Dirichlet process model (top) and the logit stick-breaking process model (bottom) with three different hyperparameter settings corresponding to prior tie probabilities approximately 0.84, 0.50, and 0.34. DDE exposure data are represented by small vertical bars.}
    \label{fig:ddp_pretermbirth_nonsmoke}
\end{figure}

Figure~\ref{fig:ddp_pretermbirth_nonsmoke} presents the estimated preterm birth probabilities given DDE exposures for the non-smoking group of size $n=1290$. Similar to Figure~\ref{fig:ddp_pretermbirth_smoke}, the estimated preterm birth probabilities show an overall increasing pattern as the DDE exposure level rises. However, unlike the logistic-beta dependent Dirichlet process, the hyperparameter choice of the logit stick-breaking process model has a substantial impact on uncertainty estimates. 

\end{document}